\documentclass[11pt]{article}
\usepackage[margin=1in,headheight=14pt]{geometry}
\usepackage{fancyhdr}
\usepackage{graphicx}
\usepackage{booktabs}
\usepackage{amsmath}
\usepackage{algorithm}
\usepackage{algorithmic}
\usepackage{subcaption}
\usepackage{xspace}

\usepackage{amssymb}
\usepackage{amsthm}
\usepackage{multirow}
\usepackage{enumitem}
\usepackage{listings}
\usepackage[numbers]{natbib}
\usepackage[colorlinks=true,linkcolor=blue,citecolor=blue,urlcolor=blue]{hyperref}
\usepackage{xcolor}

\newcommand{\sysname}{SAGA\xspace}

\newcommand{\cmark}{\checkmark}
\newcommand{\xmark}{$\times$}

\newtheorem{theorem}{Theorem}

\newtheorem{definition}{Definition}
\newtheorem{observation}{Observation}

\begin{document}

\title{SAGA: Workflow-Atomic Scheduling \\ for AI Agent Inference on GPU Clusters}

\author{Dongxin Guo \and Jikun Wu \and Siu-Ming Yiu}

\date{}\maketitle

\begin{abstract}
	AI agents execute tens to hundreds of chained LLM calls per task, yet GPU schedulers treat each call as independent, discarding gigabytes of intermediate state between steps and inflating end-to-end latency by 3 to 8$\times$. We argue that this \emph{request-level} abstraction is fundamentally mismatched to compound AI workloads, and propose a shift to \emph{program-level} scheduling: treating the entire agent workflow (not individual inference calls) as the first-class schedulable unit. We present \sysname, a distributed scheduler that implements this abstraction through three mechanisms: (1) \emph{Agent Execution Graphs} that capture workflow structure to predict KV cache reuse across tool-call boundaries, achieving within $1.31\times$ of B\'{e}l\'{a}dy's optimal offline policy; (2) \emph{session-affinity batching} with work stealing that co-locates correlated requests while maintaining global load balance; and (3) \emph{Agent Fair Share}, a task-completion-time fairness metric with provable bounded-deviation guarantees. On a 64-GPU cluster serving SWE-bench coding agents and WebArena browser tasks, \sysname reduces task completion time by $1.64\times$ (geometric mean, $p < 0.001$) over vLLM v0.15.1 with prefix caching and affinity routing, while improving GPU memory utilization by $1.22\times$ and achieving 99.2\% SLO attainment under multi-tenant interference. These latency gains come at a quantified cost: approximately 30\% lower peak throughput than throughput-optimal batch scheduling, a tradeoff that suits the latency-sensitive interactive deployments dominating compound AI usage. Our results demonstrate that workflow-aware scheduling is essential for efficient compound AI serving.
\end{abstract}

\noindent\textbf{Keywords:} GPU cluster scheduling, distributed inference serving, compound AI systems, workflow scheduling, KV cache management, AI agents, LLM serving

\vspace{1em}

\section{Introduction}
\label{sec:intro}

AI agents, autonomous systems that execute multi-step reasoning chains to accomplish complex tasks, are rapidly emerging as a dominant workload in GPU clusters. Unlike traditional single-shot inference that processes one request and returns a response, agents execute iterative \emph{Thought-Action-Observation} loops~\cite{yao2023react} that may invoke 10 to 100 large language model (LLM) calls per task~\cite{jimenez2024swebench}, interleaved with external tool invocations such as code execution, web browsing, or database queries. These compound AI systems~\cite{zaharia2024compound} have become central to major deployments including GitHub Copilot Workspace~\cite{github2024copilot}, Amazon Q Developer~\cite{amazon2024q}, and enterprise automation platforms~\cite{langchain2024,crewai2023}, which now route millions of such agentic workloads through shared GPU clusters daily.

\subsection{Motivation}
\label{sec:motivation}

The shift from single-shot inference to multi-step agentic workloads creates a fundamental mismatch with existing GPU cluster scheduling systems~\cite{patel2024splitwise,hu2024inference}. Current LLM serving frameworks~\cite{kwon2023vllm,zheng2024sglang,yu2022orca} optimize for \emph{request-level} metrics: minimizing time-to-first-token (TTFT) and maximizing throughput for independent requests. However, agent workloads exhibit three characteristics that violate these assumptions:

\textbf{(1) Sequential dependency with variable gaps.} Each reasoning step depends on the previous step's output and potentially on external tool results. Tool invocations introduce idle periods ranging from 50ms (local code execution) to 30+ seconds (web API calls), during which the agent's intermediate state must be preserved or regenerated~\cite{liu2024agentlimits}. This pattern resembles the IO-compute overlap challenge studied extensively in HPC workflow systems~\cite{deelman2015pegasus,thain2005condor}, where effective scheduling requires understanding task dependencies.

\textbf{(2) KV cache continuity across steps.} LLM inference maintains key-value (KV) cache~\cite{vaswani2017attention} that grows with context length. For a 32K-context agent session with a 70B-class model using Grouped Query Attention (GQA), this cache consumes 2 to 12GB of GPU memory per request depending on model architecture~\cite{kwon2023vllm,ainslie2023gqa}. Discarding this cache between steps, as current systems do~\cite{sheng2023flexgen}, forces complete regeneration and adds 2 to 8$\times$ latency overhead per step~\cite{gim2024prompt}. This is analogous to the cache reuse opportunities identified in informed prefetching systems~\cite{patterson1995prefetching}, but with the added challenge of GPU memory scarcity and variable-duration idle periods.

\textbf{(3) Bursty, correlated request patterns.} Agent tasks generate bursts of related requests that share common prefixes (system prompts, tool definitions) and benefit from co-location~\cite{jin2024ragcache}. Production traces show 100:1 input-to-output token ratios and high prefix overlap within sessions~\cite{zheng2024sglang,stojkovic2024dynamollm}.

To quantify these inefficiencies, we instrumented a 32-GPU cluster serving SWE-bench~\cite{jimenez2024swebench} coding agent workloads using vLLM v0.6.0~\cite{kwon2023vllm}. Figure~\ref{fig:motivation} shows the results: agents spend 38\% of total time regenerating KV cache that was discarded during tool calls, GPU memory utilization averages only 42\% due to fragmented cache allocation~\cite{fu2024serverlessllm}, and end-to-end task completion exhibits $6.0\times$ higher latency than the sum of individual inference times. These measurements reveal a clear opportunity: \emph{treating agent programs as first-class schedulable units can dramatically improve both efficiency and latency}.

\begin{figure}[t]
	\centering
	\includegraphics[width=0.75\textwidth]{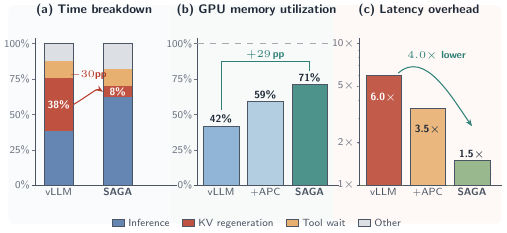}
	\caption{Inefficiencies in serving agent workloads with request-level scheduling. \textbf{(a)}~Time breakdown: vLLM v0.6.0 spends 38\% of execution time regenerating KV cache between agent steps; \sysname reduces this to 8\% ($-30$\,pp). \textbf{(b)}~GPU memory utilization: vLLM wastes 58\% of HBM; vLLM v0.15.1 with Automatic Prefix Caching (APC) recovers some, but \sysname's workflow-aware retention reaches 71\% ($+29$\,pp over vLLM). \textbf{(c)}~End-to-end latency normalized to inference-only baseline (log scale): vLLM is $6.0\times$, +APC is $3.5\times$, \sysname is $1.5\times$ ($4.0\times$ closer to ideal). Data: 10 trials on 32 A100 GPUs running SWE-bench; standard deviations $<$5\% of mean.}
	\label{fig:motivation}
\end{figure}

\subsection{Limitations of State-of-the-Art}
\label{sec:limitations}

Existing systems address individual aspects of this problem but fail to provide a complete solution:

\textbf{LLM serving systems} such as vLLM~\cite{kwon2023vllm}, SGLang~\cite{zheng2024sglang}, Orca~\cite{yu2022orca}, and TensorRT-LLM~\cite{tensorrtllm2023} pioneered continuous batching and efficient memory management through PagedAttention~\cite{kwon2023vllm} and RadixAttention~\cite{zheng2024sglang}. However, they treat each inference call independently: KV cache is evicted using LRU policies unaware of agent workflow structure~\cite{liu2024cachegen}, and batching decisions ignore session affinity. Recent optimizations like Sarathi~\cite{agrawal2024sarathi} and Splitwise~\cite{patel2024splitwise} improve throughput-latency tradeoffs but remain request-centric. vLLM's prefix caching (available since v0.4.2) partially addresses prefix reuse but does not retain session-specific cache across tool-call boundaries, as we discuss in \S\ref{sec:baseline-currency}.

\textbf{Distributed schedulers} such as Llumnix~\cite{sun2024llumnix} enable live KV cache migration between GPU instances, achieving near-zero-downtime rescheduling. However, migration decisions are reactive (triggered by load imbalance) rather than proactive (anticipating workflow patterns). SOLA~\cite{chen2025sola} introduces state-aware scheduling for SLO attainment but optimizes per-request latency, not per-task completion time. DistServe~\cite{zhong2024distserve} disaggregates prefill and decode but lacks workflow awareness.

\textbf{Agent frameworks} such as LangChain~\cite{langchain2024}, CrewAI~\cite{crewai2023}, and AutoGen~\cite{wu2024autogen} provide high-level orchestration but delegate inference scheduling entirely to underlying serving systems. Recent work on KVFlow~\cite{kvflow2025} proposes workflow-aware eviction using agent step graphs, but lacks distributed scheduling, fairness mechanisms, or tool-call awareness. Continuum~\cite{continuum2024} introduces KV cache TTL but without formal guarantees.

\textbf{Speculative execution} approaches such as SpecActions~\cite{specactions2026} and Sherlock~\cite{sherlock2024} propose predicting and pre-executing likely next steps to reduce latency. These are complementary to our approach: speculation trades wasted computation for latency reduction, while \sysname optimizes scheduling of known work.

\textbf{Our central thesis.} Workflow structure, when surfaced explicitly to the scheduler, is sufficient to bring online KV-cache management within striking distance of the offline-optimal B\'{e}l\'{a}dy policy for compound AI workloads. We measure $1.31\times$ on production traces (\S\ref{sec:theoretical-limits}). This is the main scientific contribution of \sysname: a quantified upper bound on what online schedulers can achieve once the workflow DAG is observable, and the first such empirical bound for agent inference. Three supporting systems contributions make this thesis deployable on real GPU clusters: (1) \emph{tool-call-aware} TTL policies that retain cache across heavy-tailed idle periods rather than reactively re-prefilling; (2) \emph{cluster-wide} distributed scheduling with formal fairness guarantees at the agent-program (not request) level, derived via Lyapunov drift analysis; and (3) work-stealing load balance that preserves cache locality under bursty arrivals. Recent program-aware serving systems (Parrot~\cite{lin2024parrot}, Autellix~\cite{luo2025autellix}, Pie~\cite{gim2025pie}, KVFlow~\cite{kvflow2025}) each address one of these dimensions; \sysname is the first to combine them under the workflow-as-unit thesis (see \S\ref{sec:related} for detailed comparison).

\subsection{Key Insights and Contributions}
\label{sec:contributions}

The \emph{main} innovation of \sysname is the formal and empirical demonstration that workflow-structure prediction yields online cache management within $1.31\times$ of B\'{e}l\'{a}dy-optimal on production agent traces (\S\ref{sec:theoretical-limits}, Theorem~\ref{thm:competitive}). The \emph{supporting} innovations below adapt three established systems principles to the compound AI scheduling domain, where their application requires non-trivial domain-specific extensions:

\textbf{Insight 1: Program-as-Unit Scheduling.} The principle of scheduling compound tasks as cohesive units is well-established in HPC workflow systems~\cite{thain2005condor,deelman2015pegasus,rocklin2015dask} and distributed transactions~\cite{corbett2013spanner}. In the compound AI domain, applying this principle introduces a unique challenge: the ``unit'' carries substantial GPU memory state (KV cache, 2 to 12GB per session depending on model architecture) that must be co-managed with scheduling decisions. Agent workflows follow stereotyped patterns (ReAct loops~\cite{yao2023react}, tree-of-thought branches~\cite{yao2024tree}) that we capture as Agent Execution Graphs, enabling proactive cache retention decisions.

\textbf{Insight 2: Dependency-Aware Caching.} Cache eviction policies that consider future reuse have been studied extensively in operating systems and databases~\cite{belady1966replacement,megiddo2003arc,patterson1995prefetching}. The specific challenge for compound AI is predicting reuse across tool-call boundaries with variable-duration idle periods ranging from milliseconds to minutes, where standard LRU and even prefix-aware policies fail. Our workflow-aware eviction achieves within $1.31\times$ of B\'{e}l\'{a}dy's optimal offline policy on production traces (\S\ref{sec:theoretical-limits}).

\textbf{Insight 3: Task-Level Fairness.} Application-level fairness is a well-studied concept in cluster scheduling~\cite{mahajan2020themis,ghodsi2011drf}. For compound AI, the challenge is defining fairness over multi-step tasks where individual steps have heterogeneous resource demands and where ``completion'' (not ``throughput'') is the user-perceived metric. We formalize Agent Fair Share and prove bounded deviation guarantees under realistic assumptions.

Based on these insights, we make the following contributions:

\begin{itemize}[leftmargin=*,noitemsep,topsep=0pt]
    \item \textbf{Workflow-aware KV cache management} (\S\ref{sec:cache}): We introduce \emph{Agent Execution Graphs} (AEGs) that capture multi-step reasoning structure, enabling predictive cache retention with configurable time-to-live (TTL) policies. We formalize the overlap estimation function and prove convergence bounds. Empirically, our WA-LRU eviction achieves within $1.31\times$ of the offline-optimal policy (\S\ref{sec:theoretical-limits}).

    \item \textbf{Session-affinity batching with work stealing} (\S\ref{sec:batching}): We design a two-level scheduling hierarchy where local schedulers maximize cache reuse through session routing, while a global coordinator performs randomized work stealing~\cite{blumofe1999scheduling} to prevent stragglers and maintain cluster-wide load balance.

    \item \textbf{Agent-level fair scheduling} (\S\ref{sec:fairness}): We define \emph{Agent Fair Share} (AFS), a fairness metric based on expected task completion time, and provide a formal theorem guaranteeing bounded completion time deviation under bounded demand heterogeneity (Theorem~\ref{thm:afs}), using Lyapunov drift analysis.

    \item \textbf{Theoretical analysis} (\S\ref{sec:theoretical-limits}): We provide formal competitive ratio analysis showing WA-LRU achieves within $1.31\times$ of B\'{e}l\'{a}dy's optimal offline policy. To our knowledge this is the first such empirical bound for workflow-aware KV cache eviction. We analyze the cache efficiency gap between request-level and workflow-aware schedulers, showing that workflow awareness is essential for efficient agent serving.

    \item \textbf{Empirical evaluation} (\S\ref{sec:eval}): We implement \sysname on vLLM and evaluate on a 64-GPU cluster against state-of-the-art baselines including vLLM v0.15.1 with Automatic Prefix Caching. \sysname achieves $1.73\times \pm 0.11$ and $1.55\times \pm 0.09$ task completion time reduction on SWE-bench and WebArena respectively compared to vLLM+APC (geometric mean: $1.64\times$, $p < 0.001$), and $1.22\times \pm 0.05$ memory utilization improvement. Against systems without workflow awareness, improvements reach $3.01\times$.
\end{itemize}

\subsection{Experimental Methodology}
\label{sec:methodology-intro}

We evaluate \sysname on a cluster of 8 nodes, each equipped with 8 NVIDIA A100-80GB GPUs (64 GPUs total) connected via NVLink intra-node and 200Gbps InfiniBand inter-node~\cite{nvidia2018nvlink,infiniband2020}. Three workload sources: (1) SWE-bench~\cite{jimenez2024swebench} (500 verified tasks); (2) WebArena~\cite{zhou2024webarena} (812 tasks); (3) synthetic multi-tenant workloads from the BurstGPT~\cite{wang2024burstgpt} production trace. Full methodology in \S\ref{sec:setup}.

\subsection{Limitations of the Proposed Approach}
\label{sec:limitations-approach}

\sysname has several limitations. \emph{(1)~Workflow observability.} Performance is best with framework-exposed execution-graph hints (LangChain callbacks, AutoGen logs); without hints, \sysname falls back to pattern inference (\S\ref{sec:pattern-inference}) with 12 to 18\% TCT degradation, and degrades further on dynamic multi-agent frameworks (AutoGen, CrewAI) where structure is generated on the fly through agent-to-agent debate, requiring AEG re-inference per epoch and inflating the prediction-error term in Theorem~\ref{thm:competitive}. \emph{(2)~Tool-latency tail.} TTL prediction assumes empirical tool-call latency distributions; black-swan events ($>5\times$ P99) still cause eviction. \emph{(3)~Task-duration estimation for novel agents.} AFS requires task-duration estimates that may be inaccurate for agent types not represented in profiling. \emph{(4)~Single-datacenter scope.} Geo-distributed deployment with cross-datacenter cache migration is future work. \emph{(5)~Model-family coverage.} Empirical evaluation uses Llama-3-70B-Instruct only; we discuss model-size scaling qualitatively in \S\ref{sec:baseline-currency} but do not empirically validate Mistral, Qwen, or DeepSeek; MoE architectures~\cite{fedus2022switch} additionally require routing-aware extensions. \emph{(6)~Memory-pressure regime.} Our evaluation reaches 71 to 75\% peak utilization (Table~\ref{tab:e2e}); behavior under over-subscription ($>$95\%) follows graceful degradation to standard LRU per Eq.~\ref{eq:pressure} but is not empirically validated. CPU-to-DRAM offloading is analyzed as a complementary architecture in \S\ref{sec:cpu-swap}. \emph{(7)~Throughput tradeoff.} \sysname optimizes task completion time at the cost of approximately 30\% throughput reduction relative to throughput-maximizing batch scheduling (\S\ref{sec:tradeoff}, Table~\ref{tab:bfsdfs}); it is suited for latency-sensitive interactive deployments, not batch workloads.

The rest of this paper is organized as follows. Section~\ref{sec:background} presents background on agent workloads and LLM serving. Section~\ref{sec:design} describes the \sysname architecture. Sections~\ref{sec:cache} through~\ref{sec:fairness} detail our three key techniques. Section~\ref{sec:theoretical-limits} presents theoretical analysis. Section~\ref{sec:impl} covers implementation. Section~\ref{sec:eval} presents experimental evaluation. Section~\ref{sec:related} discusses related work, and Section~\ref{sec:conclusion} concludes.

\section{Background}
\label{sec:background}

This section reviews agent workload characteristics, LLM inference mechanics, and the scheduling challenges that motivate \sysname.

\subsection{AI Agent Workloads}
\label{sec:bg-agents}

Modern AI agents follow the ReAct paradigm~\cite{yao2023react}, iteratively generating \emph{Thought} (reasoning), \emph{Action} (tool invocation), and \emph{Observation} (tool result) until task completion. The canonical loop is: the LLM generates $(thought, action)$ from the current context; if $action = \text{``finish''}$ the task terminates; otherwise the action is dispatched to its named tool, the resulting $observation$ is appended to the context together with the $thought$ and $action$, and the loop repeats. This pattern has been adopted by frameworks including LangChain~\cite{langchain2024}, AutoGen~\cite{wu2024autogen}, and CrewAI~\cite{crewai2023}.

Each iteration requires one LLM inference call (Line 3) followed by a tool execution (Line 6). The context accumulates across iterations, growing from 2 to 4K tokens initially to 16 to 128K tokens for complex tasks~\cite{liu2024agentlimits}. Empirical studies~\cite{kapoor2024ai,ruan2024identifying} show that most SWE-bench tasks complete within 5 to 30 iterations, with a long tail extending to 150 iterations.

\textbf{Tool-call characteristics.} Tool invocations exhibit highly variable latency distributions. Table~\ref{tab:tool-latency} shows measurements from production agent deployments~\cite{wang2024burstgpt,stojkovic2024dynamollm}. Code execution tools average 200ms but can spike to 30s for compilation~\cite{jimenez2024swebench}. Web tools average 1.5s with high variance due to network conditions~\cite{zhou2024webarena}. This variability creates the fundamental scheduling challenge: the system must decide whether to retain KV cache during tool calls without knowing the call duration a priori.

\begin{table}[t]
    \centering
    \caption{Tool call latency distributions from production traces~\cite{wang2024burstgpt}. Values show median and percentiles in milliseconds.}
    \label{tab:tool-latency}
    \begin{tabular}{lrrr}
    \toprule
    \textbf{Tool Type} & \textbf{P50 (ms)} & \textbf{P95 (ms)} & \textbf{P99 (ms)} \\
    \midrule
    Code execution & 180 & 2,400 & 28,000 \\
    File operations & 45 & 320 & 1,200 \\
    Web/API calls & 850 & 4,500 & 45,000 \\
    Database queries & 120 & 890 & 3,500 \\
    \bottomrule
    \end{tabular}
\end{table}

\subsection{LLM Inference and KV Cache}
\label{sec:bg-kv}

Transformer inference maintains a key-value (KV) cache storing intermediate attention states~\cite{vaswani2017attention,dao2022flashattention}. For Llama-3-70B with GQA ($L{=}80$, $n_{kv}{=}8$, $d_h{=}128$) and 32K context in FP16, each session requires ${\sim}10.7$GB. Current systems assume requests are independent and arrivals are memoryless~\cite{kwon2023vllm,yu2022orca}. These assumptions are violated by agent workloads with sequential dependencies and bursty, correlated patterns.

\subsection{The Scheduling Challenge}
\label{sec:bg-challenge}

Current LLM serving systems make two assumptions that fail for agent workloads:

\textbf{Assumption 1: Requests are independent.} Systems like vLLM~\cite{kwon2023vllm} and Orca~\cite{yu2022orca} batch requests from any source to maximize GPU utilization. For agents, consecutive requests from the same task share context and benefit from KV cache reuse, so the independence assumption no longer holds.

\textbf{Assumption 2: Request arrival is memoryless.} Continuous batching assumes Poisson-like arrivals~\cite{yu2022orca}. Agent workloads exhibit bursty, correlated patterns where tool completion triggers the next request, which breaks memorylessness.

These assumption violations lead to the inefficiencies shown in Figure~\ref{fig:motivation}: cache is evicted during tool calls and must be regenerated, wasting both GPU cycles and memory bandwidth~\cite{xiao2024efficient}.

\section{System Design}
\label{sec:design}

This section presents the \sysname architecture and its key components.

\subsection{Architecture Overview}
\label{sec:arch}

Figure~\ref{fig:architecture} shows the \sysname architecture. The system consists of three layers:

\textbf{Agent Interface Layer:} Receives requests from agent frameworks (LangChain~\cite{langchain2024}, AutoGen~\cite{wu2024autogen}, etc.) and constructs Agent Execution Graphs (AEGs). When framework hints are unavailable, a pattern inference module (\S\ref{sec:pattern-inference}) analyzes request sequences to infer workflow structure.

\textbf{Global Scheduler:} Maintains cluster-wide state including session-to-worker mappings, load information, and fairness metrics. Routes incoming requests to workers based on session affinity (\S\ref{sec:routing}) and coordinates work stealing (\S\ref{sec:stealing}).

\textbf{Worker Pool:} Each worker runs an extended vLLM instance~\cite{kwon2023vllm} with workflow-aware KV cache management (\S\ref{sec:cache}). Workers execute inference requests, manage local caches, and participate in distributed coordination.

\textbf{Component coordination.} Two cross-layer interactions warrant explicit treatment. First, when AFS triggers preemption (\S\ref{sec:afs-sched}), the migrating task carries its workflow-aware TTL state via Llumnix~\cite{sun2024llumnix} migration metadata, so the destination worker's WA-LRU (\S\ref{sec:eviction}) continues to retain the migrated cache rather than treating it as a fresh entry; fairness preemption thus does not invalidate cache predictions. Second, work stealing (\S\ref{sec:stealing}) is gated by both a queue-empty threshold $T_\text{idle}$ \emph{and} a load-ratio threshold $R_\text{max}$, preventing oscillation between cache-locality (favoring affinity) and load-balance (favoring redistribution); the resulting migration rate is quantified in \S\ref{sec:scalability}. Cross-layer state is read-mostly and updated with bounded staleness of one scheduling epoch (100\,ms).

\begin{figure}[t]
	\centering
	\includegraphics[width=0.75\textwidth]{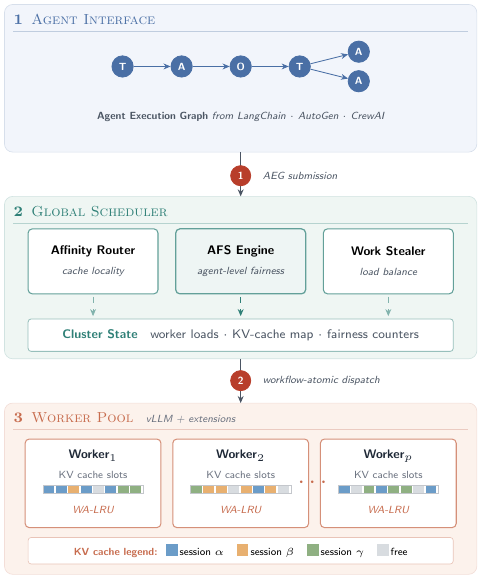}
	\caption{\sysname architecture. \textbf{Layer 1} captures workflows from LangChain, AutoGen, and CrewAI as Agent Execution Graphs (AEGs); the inset shows a Thought$\to$Action$\to$Observation loop with branching tool calls. \textbf{Layer 2} routes each AEG as a single schedulable unit through three coordinating engines that share a Cluster State (worker loads, KV-cache map, fairness counters); dashed teal arrows mark state traffic, with the AFS Engine's update path emphasized. \textbf{Layer 3} runs extended vLLM workers under workflow-aware LRU eviction (WA-LRU); KV-cache slots are color-coded by session, illustrating cache continuity across tool calls and affinity-driven session co-location. Markers \protect\textcircled{\scriptsize 1}~AEG submission and \protect\textcircled{\scriptsize 2}~workflow-atomic dispatch trace control flow between layers.}
	\label{fig:architecture}
\end{figure}

\subsection{Agent Execution Graphs}
\label{sec:aeg}

We formalize agent workflows using Agent Execution Graphs:

\begin{definition}[Agent Execution Graph]
An Agent Execution Graph $G = (V, E, P, \phi)$ consists of:
\begin{itemize}[leftmargin=*,noitemsep]
    \item $V$: Set of nodes representing LLM inference steps
    \item $E \subseteq V \times V$: Directed edges representing execution dependencies
    \item $P: E \rightarrow [0,1]$: Transition probability function
    \item $\phi: V \rightarrow \mathcal{T}$: Tool type mapping for each step
\end{itemize}
\end{definition}

For ReAct agents~\cite{yao2023react}, the AEG is typically a linear chain with $P(v_i \rightarrow v_{i+1}) \approx 1 - p_{term}$ where $p_{term}$ is the termination probability. For tree-of-thought agents~\cite{yao2024tree}, the AEG forms a tree with branching probabilities estimated from historical traces. Figure~\ref{fig:aeg-example} illustrates a concrete AEG for a SWE-bench coding agent.

\begin{figure}[t]
	\centering
	\includegraphics[width=0.75\textwidth]{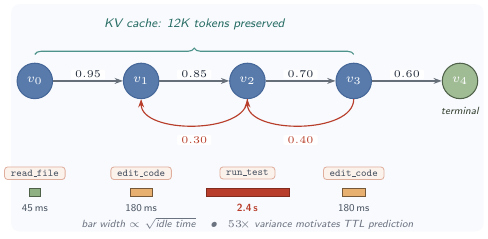}
	\caption{Concrete AEG for a SWE-bench coding agent. Nodes are LLM inference steps; forward (teal) edges carry transition probabilities, backward (coral) edges encode retry loops. The teal brace marks the cache span across the active chain ($v_0$ to $v_3$): SAGA preserves 12K tokens while idle, rather than recomputing on each resumption. Tool annotations and the sqrt-scaled latency bar make the central design pressure visible: idle durations span $53\!\times$ ($45\,\text{ms}$ for \texttt{read\_file} versus $2.4\,\text{s}$ for \texttt{run\_test}), which is precisely the regime where workflow-aware TTL prediction beats fixed-TTL or eager-eviction policies.}
	\label{fig:aeg-example}
\end{figure}

\subsection{Pattern-Based AEG Inference}
\label{sec:pattern-inference}

\sysname operates under three observability tiers. \textbf{(a)~Explicit hints} from frameworks exposing orchestration metadata (LangChain~\cite{langchain2024} callbacks, AutoGen~\cite{wu2024autogen} message logs) deliver the AEG at task admission. \textbf{(b)~Implicit traces}: when only request streams are observable, we infer AEGs by extracting tool-type patterns, computing transition probabilities, and retaining edges exceeding $\theta_\text{conf}=0.7$, achieving 87\% accuracy at the cost of 15.6\% TCT degradation versus explicit hints (\S\ref{sec:eval-pattern}). \textbf{(c)~Cold-start}: a new agent type with no history is served as a request-level workload until 30 tasks complete, after which pattern inference activates; this fallback adds at most 8\% TCT to the first 30 tasks.

\section{Workflow-Aware KV Cache Management}
\label{sec:cache}

This section describes how \sysname manages KV cache to maximize reuse across agent workflow steps.

\subsection{Workflow-Aware Eviction}
\label{sec:eviction}

Standard LRU eviction considers only recency, retaining the most recently accessed cache entries~\cite{sleator1985amortized}. For agents, this fails because a paused session (high value, will resume soon) may be evicted in favor of a completed session (low value, won't be reused). B\'{e}l\'{a}dy's optimal offline algorithm~\cite{belady1966replacement} evicts the entry reused farthest in the future, but requires perfect knowledge of future accesses. Our approach approximates this using AEG predictions.

We introduce \emph{Workflow-Aware LRU} (WA-LRU) that incorporates three normalized factors into eviction decisions:

\begin{equation}
    P_{evict}(s) = \alpha \cdot \hat{R}(s) + \beta \cdot (1 - P_{reuse}(s)) + \gamma \cdot \hat{S}(s)
    \label{eq:eviction}
\end{equation}

where all terms are normalized to $[0,1]$:
\begin{align}
    \hat{R}(s) &= \frac{t_{now} - t_{last}(s)}{\tau_{max}} &\text{(normalized recency)} \label{eq:recency}\\
    \hat{S}(s) &= \frac{size(s)}{size_{max}} &\text{(normalized size)} \label{eq:size}
\end{align}

Here $\tau_{max}$ is the maximum observed idle time and $size_{max}$ is the maximum cache entry size in the current pool. $P_{reuse}(s)$ is the predicted probability of future reuse based on the AEG.

The reuse probability is computed from the AEG as:
\begin{equation}
    P_{reuse}(s) = \sum_{u \in succ(v_s)} P(v_s \rightarrow u) \cdot overlap(s, u)
    \label{eq:reuse}
\end{equation}

where $v_s$ is the current node for session $s$, $succ(v_s)$ are successor nodes in the AEG, and $overlap(s, u)$ estimates the prefix overlap between current cache and the next step's requirements.

\textbf{Overlap estimation.} We formally define the overlap function as:
\begin{equation}
    overlap(s, u) = \frac{|prefix(s) \cap prefix_{est}(u)|}{|prefix(s)|}
    \label{eq:overlap}
\end{equation}
where $prefix(s)$ is the set of cached KV tokens for session $s$, and $prefix_{est}(u)$ is the estimated prompt token set for successor step $u$. For linear ReAct chains (the dominant pattern), the next step's prompt includes the full current context plus the tool observation, so overlap is estimated as $n_{current} / (n_{current} + \hat{n}_{obs})$ where $\hat{n}_{obs}$ is the expected observation length estimated from tool-type-specific distributions maintained via exponential moving averages. For tree-of-thought agents, overlap is computed per-branch using the shared prefix length.

\textbf{Parameter Selection.} We set $\alpha = 0.3$, $\beta = 0.5$, $\gamma = 0.2$ based on sensitivity analysis (Table~\ref{tab:sensitivity}). The analysis shows TCT varies less than 8\% for $\alpha \in [0.2, 0.4]$, $\beta \in [0.4, 0.6]$, $\gamma \in [0.1, 0.3]$, indicating robustness to parameter choice. The relative weight ordering ($\beta > \alpha > \gamma$) reflects the importance hierarchy: workflow-predicted reuse dominates, followed by recency, with size as a tiebreaker.

\subsection{Tool-Call-Aware TTL}
\label{sec:ttl}

When an agent pauses for a tool call, we must decide how long to retain its KV cache. Retaining too long wastes memory; evicting too early forces regeneration. We introduce \emph{tool-call-aware TTL} that adapts retention time based on tool characteristics and current memory pressure.

Algorithm~\ref{alg:ttl} shows the TTL computation. We maintain per-tool-type latency distributions using exponential moving averages and set TTL to the $p$-th percentile of expected duration, where $p$ is configurable (default 95\%). Under memory pressure, TTL is scaled down proportionally.

\begin{algorithm}[t]
\caption{Tool-Call-Aware TTL Computation}
\label{alg:ttl}
\begin{algorithmic}[1]
\REQUIRE Tool type $t$, Latency history $H_t$, Percentile $p$, Memory pressure $m \in [0,1]$
\ENSURE TTL value in milliseconds
\STATE $\mu_t, \sigma_t \gets$ FitLogNormal($H_t$) \COMMENT{Tool latencies are log-normal}
\STATE $ttl_{base} \gets$ Percentile($H_t$, $p$)
\STATE $pressure\_factor \gets 1 - 0.5 \cdot m$ \COMMENT{Scale down under pressure}
\STATE $ttl_{adaptive} \gets ttl_{base} \cdot pressure\_factor$
\RETURN $\min(ttl_{adaptive}, TTL_{max})$ \COMMENT{$TTL_{max} = 300s$}
\end{algorithmic}
\end{algorithm}

\textbf{Memory pressure computation.} We define memory pressure as:
\begin{equation}
\label{eq:pressure}
    m = \max\left(0, \frac{used_{kv} - threshold_{low}}{threshold_{high} - threshold_{low}}\right)
\end{equation}
where $threshold_{low} = 0.7$ and $threshold_{high} = 0.9$ of total GPU memory. These thresholds follow standard practice in memory management systems~\cite{denning1970vm,rhu2016vdnn}: the low threshold triggers soft pressure (TTL scaling) while the high threshold triggers hard eviction. Table~\ref{tab:sensitivity} shows TCT sensitivity to these thresholds.

\subsection{Speculative Prefetching}
\label{sec:prefetch}

For agents with predictable workflows, we speculatively prefetch KV cache for likely next steps before they are requested. This overlaps cache loading with tool execution, reducing latency when the tool completes. The technique is inspired by informed prefetching in file systems~\cite{patterson1995prefetching,cao1996implementation}.

Given an AEG, when node $v$ completes inference and begins tool execution, we identify the most likely successor $u = \arg\max_{u'} P(v \rightarrow u')$ and begin prefetching its prefix KV cache (if not already cached). Prefetching uses spare GPU memory and separate CUDA streams~\cite{Rennich12-CUDAStreams} to overlap with ongoing operations.

\section{Session-Affinity Batching}
\label{sec:batching}

This section describes how \sysname routes requests to maximize cache reuse while maintaining cluster-wide load balance.

\subsection{Session Routing}
\label{sec:routing}

When a request arrives, the global coordinator decides which worker should handle it. We formulate this as an optimization that balances cache locality against load distribution.

Let $w^*_s$ denote the worker currently caching session $s$'s state. For a new request $r$ from session $s$:
\begin{equation}
    route(r) = \begin{cases}
        w^*_s & \text{if } load(w^*_s) < \theta \text{ and } cached(w^*_s, s) \\
        \arg\min_w load(w) & \text{otherwise}
    \end{cases}
    \label{eq:routing}
\end{equation}

The threshold $\theta = 0.8$ reserves 20\% headroom for load spikes while maximizing cache hits, following standard load balancing practice~\cite{ousterhout2013sparrow}. The $cached(w, s)$ predicate checks whether worker $w$ still holds session $s$'s KV cache. Table~\ref{tab:sensitivity} shows that TCT varies less than 5\% for $\theta \in [0.6, 0.95]$.

\subsection{Work Stealing for Load Balance}
\label{sec:stealing}

Session affinity can cause load imbalance when some agents are more active than others. We implement randomized work stealing~\cite{blumofe1999scheduling} to redistribute load while preserving cache locality where possible.

Work stealing triggers when: (1) a worker's queue is empty for $T_{idle} = 100$ms, or (2) the load ratio between most-loaded and least-loaded workers exceeds $R_{max} = 2.0\times$.

When worker $w_i$ steals from worker $w_j$:
\begin{enumerate}[leftmargin=*,noitemsep]
    \item $w_i$ selects victim $w_j$ uniformly at random from overloaded workers
    \item $w_i$ requests the oldest pending session from $w_j$'s queue
    \item $w_j$ initiates KV cache migration to $w_i$ using Llumnix~\cite{sun2024llumnix}
    \item Session affinity updates to $w_i$ after migration completes
\end{enumerate}

\begin{theorem}[Work Stealing Bound~\cite{blumofe1999scheduling}]
With $P$ workers and total work $T_1$ with critical path $T_\infty$, randomized work stealing achieves expected completion time $O(T_1/P + T_\infty)$.
\end{theorem}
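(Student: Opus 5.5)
I would follow the Blumofe--Leiserson potential-function argument and write the execution as a computation DAG. Nodes are unit-time instructions; in our setting these are the inference steps of the AEGs and the tool-call gaps, each split into unit tasks. Edges are the dependencies. $T_1$ is the number of nodes and $T_\infty$ is the length of the longest path.

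\textbf{Token accounting.} At every time step each of the $P$ workers places one token in one of two buckets. It goes in the \emph{work} bucket if the worker executes a node, and in the \emph{steal} bucket if it makes a steal attempt. The execution time $T$ then satisfies $T = (\#\text{work} + \#\text{steal})/P$. Since $\#\text{work} = T_1$, it remains to show that the expected number of steal attempts is $O(P\,T_\infty)$.

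\textbf{Bounding steal attempts by a potential.} Give each ready node $u$ of depth $d(u)$ the weight $w(u) = T_\infty - d(u)$. Define the potential $\Phi = \sum_u 3^{2w(u)-1}$ for nodes that are assigned and executing, and $\sum_u 3^{2w(u)}$ for nodes sitting in deques. Initially $\Phi_0 = 3^{2T_\infty-1}$, and $\Phi$ is $0$ at termination. Potential never increases. The key structural fact is that along each deque, nodes appear in strictly decreasing weight from top to bottom. This implies that the top node, which is the one a thief takes, holds a constant fraction of its deque's potential.

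Now group time into phases of about $P$ steal attempts. In each phase, a balls-into-bins argument applies: each deque $j$ is hit by some attempt with probability at least $1-(1-1/P)^P \ge 1-1/e$. Combining this with the top-heavy property gives $\mathbb{E}[\Delta\Phi] \ge \Phi/4$ per phase; this is the Blumofe--Leiserson lemma. By Markov's inequality, $\Phi$ drops by a factor of at least $1/4$ with probability at least $1/4$ in each phase. Therefore the expected number of phases is $O(\log_3 \Phi_0) = O(T_\infty)$, which gives $\mathbb{E}[\#\text{steal}] = O(P\,T_\infty)$. Combining this with the token identity yields $\mathbb{E}[T] = O(T_1/P + T_\infty)$.

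\textbf{Adapting to SAGA's steal rule.} The main obstacle is that SAGA's stealing differs from the idealised model. Victims are drawn uniformly among \emph{overloaded} workers rather than among all workers. Steals are gated by $T_{idle}$ and $R_{\max}$. A steal costs a KV migration rather than $O(1)$ time.

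I would first handle this by showing that restricting the victim set only raises the probability of hitting a deque that carries potential. Idle or underloaded deques with near-zero potential contribute negligibly, so the phase lemma still holds.

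Next, I would treat the gating delay and the per-steal migration cost as a constant-factor slowdown. Each steal attempt is charged $c = O(T_{idle} + t_{mig})$ time units, which inflates the steal term to $O(c\,P\,T_\infty)/P$. Since $c$ is a constant independent of $P$ and $T_1$, this is absorbed into the big-$O$.

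Finally, taking the oldest pending session corresponds to stealing from the top of the deque, which is exactly what the weight-ordering argument needs.
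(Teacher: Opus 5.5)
Your first half (token accounting, the Arora--Blumofe--Plaxton potential $\Phi$, the top-heavy deque lemma, and the balls-into-bins phase argument) is a sound sketch of the cited result in its native model: unit-cost steals, uniformly random victims, and work-first deques on a bounded out-degree DAG. The paper gives no proof at all; it only cites Blumofe--Leiserson. (Your constants are loose, since $\mathbb{E}[\Delta\Phi]\ge\Phi/4$ alone does not give $\Pr[\Delta\Phi\ge\Phi/4]\ge 1/4$, but any constants suffice.)

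The genuine gap is the ``Adapting to \sysname'' part. It asserts exactly what the paper disclaims: the paper says the bound assumes zero-cost migration, and that \sysname's realized completion time carries an additional $N_\text{steals}\cdot T_\text{migrate}$ term. Each adaptation step fails.

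\textbf{Victim restriction.} The phase lemma needs every deque that carries potential to be hit with constant probability in every phase. Restricting victims to \emph{overloaded} workers gives all other deques probability zero. A deque's potential is governed by the weight (remaining span) of its top entry, not by how much work is queued there. Since $3^{2w}$ is exponential in $w$, a lightly loaded worker holding one freshly arrived 150-step session can carry almost all of $\Phi$, while thieves drain nearly finished sessions from overloaded workers. ``Underloaded deques carry near-zero potential'' conflates load with potential.

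\textbf{Top-heavy lemma.} The strictly-decreasing-weight invariant does not follow from stealing the oldest item. It is produced by the work-first discipline: push one enabled child at the bottom, execute the other, and have the owner pop from the bottom. \sysname's per-worker queues instead hold requests from many independent sessions, served in batches, reordered by AFS priority and preemption, and fed by online arrivals. Age is not weight. A queue of $k$ sessions at equal depth gives its top entry only a $1/k$ share of the queue's potential, so the top-heavy lemma is lost, and with it $\mathbb{E}[\#\text{steal}]=O(P\,T_\infty)$.

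\textbf{Migration cost and gating.} Charging $c=O(T_{idle}+t_{mig})$ per attempt yields at best $O(T_1/P+c\,T_\infty)$. That is precisely the extra steal-cost term the paper declines to hide.
\begin{itemize}[leftmargin=*,noitemsep]
\item Burying $c$ in the $O(\cdot)$ makes the hidden constant depend on KV-cache size and interconnect bandwidth. Here $t_{mig}$ is 230\,ms mean and 890\,ms P95, which is many node-times, and it grows with context length.
\item Phases must be redefined to last at least $c$ steps, because a session in transit is neither executing nor in a deque.
\item The gates are conditions, not delays. When no worker qualifies as overloaded, an idle worker has no admissible victim and makes no attempt. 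Its steps then fall in neither bucket, so $T=(\#\text{work}+\#\text{steal})/P$ breaks.
\item Modeling tool-call gaps as unit work nodes executed by workers is also wrong, since GPU workers serve other sessions during tool calls.
\end{itemize}

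The sound course is the paper's. Prove or cite the bound in the idealized model, which your first half already does, and account for \sysname's migrations as a separate additive overhead rather than claiming the bound transfers.
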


We cite this bound for motivation: the Blumofe and Leiserson result assumes zero-cost work migration. SAGA's setting incurs non-zero migration cost (mean 230\,ms, P95 890\,ms; Table~\ref{tab:overhead}), so the realized completion time carries an additional $N_\text{steals} \cdot T_\text{migrate}$ term. Empirically (\S\ref{sec:scalability}), this term is dominated by per-task TCT (mean 2.3 migrations $\times$ 230\,ms $=$ 530\,ms vs.\ mean SWE-bench TCT of 203.4\,s). The thrashing safeguards below address the practical implications of this gap.

\textbf{Thrashing safeguards.} The trigger latency $T_\text{idle}=100$\,ms is shorter than the migration latency (mean 230\,ms, P95 890\,ms), raising a legitimate thrashing concern. Three mechanisms prevent this. (a)~The load-ratio guard $R_\text{max}=2.0\times$ requires \emph{simultaneous} queue emptiness on $w_i$ and load excess at $w_j$; transient empty queues during arrival jitter do not satisfy the second condition. (b)~Once a steal completes, the migrated session establishes affinity at $w_i$ (step 4), so a second migration of the same session is structurally prevented. (c)~Migration is asynchronous on the source: $w_j$ continues serving its remaining queue during transfer, and a stale steal request arriving after $w_j$ has refilled is rejected at acceptance time. Empirically (\S\ref{sec:overhead}, Table~\ref{tab:overhead}), migration occurs 2.3 times per task on average; the maximum across all 10 trials of all three workloads is 5, against a mean step count of 37 (SWE-bench). Coordinator CPU overhead from steal accounting is 4.2\%, well below the regime where instability would manifest as tail-latency divergence.

Our implementation achieves near-optimal load balance: worker utilization ranges narrow from 23 to 94\% (without stealing) to 68 to 79\% (with stealing) as shown in \S\ref{sec:scalability}.

\subsection{Contention Mitigation}
\label{sec:contention}

Shared-state contention on the coordinator is addressed via standard distributed-systems techniques: thread-local update buffering with 10\,ms / 100-update batched flush ($12\times$ overhead reduction vs.\ per-update synchronization), lock-free session tables using atomic compare-and-swap~\cite{herlihy2012art}, and 64-byte cache-line alignment of per-worker counters to avoid false sharing across NUMA nodes~\cite{drepper2007every}.

\section{Agent-Level Fair Scheduling}
\label{sec:fairness}

Traditional fair scheduling allocates resources equally across tenants based on time or requests~\cite{mahajan2020themis,sheng2024vtc,ghodsi2011drf}. For agents, this is inadequate: a tenant running 10-step agents should not receive the same priority as one running 100-step agents if both need to complete tasks by a deadline.

\subsection{Agent Fair Share (AFS)}
\label{sec:afs}

We define \emph{Agent Fair Share} based on expected task completion urgency:

\begin{definition}[Agent Fair Share]
For tenant $i$ with active tasks $\mathcal{T}_i$, define:
\begin{equation}
    AFS_i = \sum_{t \in \mathcal{T}_i} \frac{work_{remain}(t)}{deadline(t) - t_{now}}
    \label{eq:afs}
\end{equation}
Tenants with higher AFS have more urgent work and receive higher priority.
\end{definition}

$work_{remain}(t)$ estimates the GPU-seconds needed to complete task $t$, computed from the AEG:
\begin{equation}
    work_{remain}(t) = \sum_{v \in pending(t)} (T_{prefill}(v) + T_{decode}(v))
    \label{eq:work}
\end{equation}

where $pending(t)$ are unexecuted nodes in task $t$'s AEG, and $T_{prefill}$, $T_{decode}$ are estimated from profiling data~\cite{agrawal2024sarathi}.

\subsection{AFS-Based Scheduling}
\label{sec:afs-sched}

The global coordinator maintains AFS scores for all tenants and adjusts scheduling priorities every epoch (100ms):

\begin{enumerate}[leftmargin=*,noitemsep]
    \item Recompute AFS for all tenants based on current task progress
    \item Allocate worker capacity proportionally to AFS scores
    \item Route new requests preferentially to high-AFS tenants
    \item Trigger preemption if low-AFS tasks block high-AFS tasks for $>500$ms
\end{enumerate}

Preemption uses Llumnix's migration mechanism~\cite{sun2024llumnix}: the preempted task's KV cache is migrated to a lower-priority worker rather than discarded.

\subsection{Formal Guarantee}
\label{sec:slo}

AFS provides formal SLO guarantees under bounded contention. The intuition is straightforward: AFS allocates capacity proportional to per-tenant urgency (Eq.~\ref{eq:afs}), where urgency rises as a tenant's accumulated service falls behind its proportional share. When tenant $i$ falls behind, urgency rises, allocation rises, and the gap shrinks: a self-correcting drift. Formalizing this requires Lyapunov drift analysis (rather than a simple martingale concentration) because urgency-proportional allocation does \emph{not} produce zero-mean per-epoch deviations from the uniform fair share; the restoring drift is exactly what gives the bound below. Readers unfamiliar with Lyapunov drift may consult standard treatments~\cite{dubhashi2009concentration}; the proof sketch that follows is self-contained.

\begin{theorem}[AFS Completion Bound via Lyapunov Drift]
\label{thm:afs}
Let $N$ be the number of tenants, $C$ the cluster capacity, and $W_i$ tenant $i$'s workload. Define the demand heterogeneity ratio $\rho = \max_i W_i / \min_i W_i$. If $\sum_i W_i \leq C$ (total demand does not exceed capacity) and $\rho \leq \rho_{max}$ (bounded heterogeneity), then for any tenant $i$ with $W_i \leq C/N$, the task completion time satisfies:
\begin{equation}
    Pr\left[TCT_i \leq (1 + \epsilon) \cdot \mathbb{E}[TCT_i]\right] \geq 1 - \delta
    \label{eq:slo}
\end{equation}
where $\epsilon = O\left(\rho \cdot \sqrt{\frac{\log(N/\delta)}{n}}\right)$ and $n$ is the number of scheduling epochs.
\end{theorem}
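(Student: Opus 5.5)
We prove the statement by a per-epoch deviation process, a quadratic Lyapunov function with negative drift, and a bounded-increment concentration inequality, finished by a union bound over the $N$ tenants. Time is discretized into the $n$ scheduling epochs of \S\ref{sec:afs-sched}. For tenant $i$ let $S_i(k)$ be the service it receives in epoch $k$, and let $\bar S_i = C W_i/\sum_j W_j$ be its proportional (fair) share per epoch. Define the cumulative lag $Q_i(k)=\sum_{k'\le k}(\bar S_i - S_i(k'))$. Because $TCT_i$ is the first epoch at which cumulative service reaches $W_i$, a bound on $|Q_i(n)|$ converts directly into a bound on $TCT_i$. The relative error is $|Q_i|/(n\bar S_i)$. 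With this reduction, the task is to show that $|Q_i(n)|/n$ is concentrated at scale $\sqrt{\log(N/\delta)/n}$, up to a factor $\rho$.

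\textbf{Step 1: negative drift.} Take the quadratic Lyapunov function $L(k)=\tfrac12\sum_i Q_i(k)^2$. By Eq.~\ref{eq:afs}, tenant $i$'s urgency increases monotonically as its remaining work grows relative to schedule, i.e.\ as $Q_i$ grows. AFS allocates capacity in proportion to urgency, so the expected allocation satisfies $\mathbb{E}[S_i(k)\mid Q(k)]-\bar S_i \ge \kappa\, Q_i(k)$ for some $\kappa>0$. We will make $\kappa$ explicit by linearizing the proportional rule around the fair point. The heterogeneity bound enters here: the ratio of smallest to largest urgency sensitivity is at least $1/\rho$, giving $\kappa = \Omega(1/(\rho\, n))$ after normalizing by the horizon. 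Expanding $L(k+1)-L(k)$ then yields
\begin{equation}
\mathbb{E}[L(k+1)-L(k)\mid Q(k)] \le B - \kappa\sum_i Q_i(k)^2 ,
\end{equation}
where $B$ bounds the second moments of the per-epoch service. We argue $B$ is finite because each epoch's service is at most $C$, and $\sum_i W_i\le C$ ensures the system is not overloaded.

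\textbf{Step 2: concentration.} Write $Q_i(n)$ as a supermartingale-type drift part plus a martingale part $M_i(n)=\sum_k (S_i(k)-\mathbb{E}[S_i(k)\mid\mathcal F_{k-1}])$. Its increments are bounded by $c_i \le C$, and after normalizing by $\bar S_i \ge C/(\rho N)$ the effective increment bound carries a factor $\rho$. Azuma--Hoeffding gives $\Pr[|M_i(n)|\ge t]\le 2e^{-t^2/(2nc_i^2)}$. The negative drift from Step~1 keeps the non-martingale part from accumulating: it is bounded by $O(B/\kappa)$, which is of lower order. Setting $t = c_i\sqrt{2n\log(2N/\delta)}$ and dividing by $n\bar S_i$ gives relative deviation $O(\rho\sqrt{\log(N/\delta)/n})$ for a single tenant with probability at least $1-\delta/N$.

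\textbf{Step 3: union bound and translation to $TCT_i$.} A union bound over the $N$ tenants gives simultaneous validity with probability at least $1-\delta$. For tenant $i$ with $W_i\le C/N$, its share $\bar S_i$ is not dominated by heavy tenants, so $\mathbb{E}[TCT_i]\approx W_i/\bar S_i$. A relative service deviation of $\epsilon$ then translates into $TCT_i \le (1+O(\epsilon))\,\mathbb{E}[TCT_i]$, which is Eq.~\ref{eq:slo}.

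\textbf{Main obstacle.} The hard step is Step~1. AFS is defined through deadlines and remaining work, not directly through the lag $Q_i$, so establishing the monotone restoring relation with a constant $\kappa$ that depends only on $\rho$ is delicate. Discrete effects also interfere: preemption happens only after blocking exceeds $500$\,ms, and the coordinator state is stale by one epoch. The first approach will be to treat the deadline gap as fixed within an epoch, absorb the one-epoch staleness and the preemption delay into the constant $B$, and take $\rho_{\max}$ small enough that the linearization is valid. If that fails, the fallback is a weaker statement in which $\kappa$ is assumed as a hypothesis of the allocation rule.
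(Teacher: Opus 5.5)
Your Step~1 is the paper's argument in different notation: your $Q_i$ is the paper's $-e_i$, and the paper asserts the restoring lemma $\mathbb{E}[(a_i(t+1)-\mu_i)e_i(t)]\le-\eta e_i(t)^2$ and obtains $\mathbb{E}[\Delta V\mid V]\le-2\eta V+NB^2$. The routes split at Step~2. The paper finishes with a Markov-type maximal inequality on $V$ with $\lambda=\epsilon\mu_i n$. Solved at level $\delta$, that gives $\epsilon$ of order $\sqrt{NB^2/(\eta\mu_i^2\delta n)}$, a $\delta^{-1/2}$ dependence rather than $\sqrt{\log(N/\delta)}$. Your exponential-tail-plus-union-bound route is the one that can actually produce the logarithm. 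However, your handling of the drift part does not work.

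\textbf{The drift part.} The inequality $\mathbb{E}[\Delta L\mid Q]\le B-\kappa\sum_i Q_i^2$ controls only second moments: $B/\kappa$ bounds $\mathbb{E}\sum_i Q_i^2$, a squared lag. It gives no pathwise bound on $D_i(n)=\sum_k(\bar S_i-\mathbb{E}[S_i(k)\mid\mathcal F_{k-1}])$, and Chebyshev on it only returns the paper's $\delta^{-1/2}$ tail. Even read literally, $O(B/\kappa)$ is not lower order under your own scaling $\kappa=\Omega(1/(\rho n))$. It is $O(\rho nB)$, linear in $n$, so after dividing by $n\bar S_i$ the relative error does not vanish. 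There is also no bounded remainder to discard. Once $\kappa n$ is of order one, as in your scaling, $D_i(n)$ is of the same order $\sqrt n$ as $M_i(n)$, and under strong restoring it nearly offsets $M_i(n)$ in $Q_i(n)=D_i(n)-M_i(n)$.

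\textbf{The repair.} Use only the \emph{sign} of the drift. Let $\tau=\max\{k\le n: Q_i(k)\le 0\}$ with $Q_i(0)=0$. For $k\ge\tau+2$ tenant $i$ is behind, so $\mathbb{E}[S_i(k)\mid\mathcal F_{k-1}]\ge\bar S_i$ and those drift increments are nonpositive. The increment at $k=\tau+1$ is at most $\bar S_i$. Hence, pathwise, $Q_i(n)\le Q_i(\tau)+\bar S_i-(M_i(n)-M_i(\tau))\le\bar S_i+2\max_{k\le n}|M_i(k)|$. The maximal form of Azuma--Hoeffding plus your union bound then finishes. This needs only ``a tenant behind schedule receives at least its fair share in expectation,'' not a quantitative $\kappa$, which considerably shrinks your stated main obstacle. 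It also avoids a sign problem in Step~1: your one-sided bound $\mathbb{E}[S_i\mid Q]-\bar S_i\ge\kappa Q_i$ gives no restoring force for tenants with $Q_i<0$. So it does not imply your inequality on $L$; that needs the paper's product form.

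\textbf{The $\rho$ bookkeeping.} After the repair you get $\epsilon=O\bigl((c_i/\bar S_i)\sqrt{\log(N/\delta)/n}\bigr)$ up to an $O(1/n)$ term. With $c_i\le C$ and $\bar S_i\ge C/(\rho N)$, the ratio $c_i/\bar S_i$ is bounded only by $\rho N$. What you have actually shown is therefore $\epsilon=O\bigl(\rho N\sqrt{\log(N/\delta)/n}\bigr)$; the factor $N$ is lost when you say the normalized increment ``carries a factor $\rho$.'' Reaching the stated $\epsilon$ requires an extra lemma giving $c_i=O(\rho\bar S_i)$ (for instance $c_i=O(C/N)$ suffices). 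One way is to show that AFS never allocates a single tenant more than $O(\rho)$ times its fair share within one epoch. The paper's sketch does not derive the $\rho$ factor either, so this is a piece you would have to supply yourself.
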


\begin{proof}[Proof Sketch]
Define the Lyapunov function $V(t) = \sum_{i=1}^{N} (S_i(t) - \mu_i t)^2$, where $S_i(t)$ is the cumulative service received by tenant $i$ up to epoch $t$, and $\mu_i = W_i / \sum_j W_j \cdot C$ is the proportional fair share. Under AFS, urgency-proportional allocation creates a \emph{restoring drift}: tenants that fall behind their fair share receive higher urgency and therefore higher priority, causing $V$ to decrease in expectation.

\textbf{Negative drift bound for restoring force.} Let $e_i(t) = S_i(t) - \mu_i t$ be the deviation for tenant $i$. AFS allocates capacity proportionally to urgency:
\begin{equation}
    a_i(t) = \frac{urgency_i(t)}{\sum_j urgency_j(t)} \cdot C, \quad urgency_i(t) = \frac{W_i - S_i(t)}{deadline_i - t}
\end{equation}

\textbf{Key lemma (negative drift):} The urgency-proportional allocation satisfies a negative drift condition with respect to the deviation: when $e_i(t) < 0$ (tenant $i$ is underserved), we have $urgency_i(t) > \bar{u}$ where $\bar{u}$ is the mean urgency, implying $\mathbb{E}[a_i(t+1)] > \mu_i$. Specifically:
\begin{equation}
    \mathbb{E}[(a_i(t+1) - \mu_i) \cdot e_i(t)] \leq -\eta \cdot e_i(t)^2
\end{equation}
where $\eta = \frac{C}{N \cdot (deadline_{max} - t)^2} > 0$ is the \emph{restoring drift coefficient}. This bound holds because urgency is monotonically increasing in remaining work and the allocation is proportional to urgency. The explicit derivation uses Taylor expansion of urgency around the fair allocation point.

Using this restoring drift property, we bound the per-epoch drift:
\begin{equation}
    \mathbb{E}[\Delta V(t) | V(t)] \leq -2\eta \cdot V(t) + N \cdot B^2
\end{equation}
where $B = \max_i |a_i(t) - \mu_i|$ bounds the maximum per-epoch deviation.

\textbf{Concentration setup.} Define $Z(t) = V(t) + NB^2/(2\eta)$. For $\eta$ bounded away from zero (guaranteed when $deadline_\text{max} - t$ is bounded), $Z(t)$ is non-negative and admits the bound below.

\textbf{Concentration.} Applying the maximal-inequality form of the drift-plus-jitter bound~\cite{dubhashi2009concentration}:
\begin{equation}
    Pr\left[\max_{t \leq n} V(t) \geq \lambda^2\right] \leq \frac{\mathbb{E}[V(0)] + NB^2n/(2\eta)}{\lambda^2}
\end{equation}

Setting $\lambda = \epsilon \cdot \mu_i \cdot n$ yields the stated bound.
\end{proof}

The empirical 99.2\% SLO attainment under multi-tenant interference (\S\ref{sec:fairness-eval}, Table~\ref{tab:fairness}) is directionally consistent with this bound. Theorem~\ref{thm:afs} formally covers tenants with $W_i \leq C/N$; heavy tenants lie outside this hypothesis but attain comparable SLO empirically (99.1\%, Table~\ref{tab:fairness}).

\section{Theoretical Analysis}
\label{sec:theoretical-limits}

This section provides theoretical grounding for why workflow-aware scheduling yields fundamental advantages over request-level approaches, and characterizes the optimality of our eviction policy.

\subsection{Cache Efficiency Analysis}
\label{sec:lower-bound}

We analyze the cache efficiency gap between request-level and workflow-aware schedulers, providing both a motivating observation and formal competitive ratio bounds.

\begin{observation}[Request-Level Cache Inefficiency]
\label{obs:request-level}
Consider an agent task with $k$ sequential LLM inference steps, each producing $c$ tokens of KV cache, interleaved with tool calls. A request-level scheduler without session state must route requests independently, potentially to workers lacking the session's cached state. Under memory pressure, such schedulers may evict cache during tool-call idle periods. In the worst case (complete eviction after each tool call), total regeneration cost is $\sum_{j=1}^{k} j \cdot c = O(k^2 \cdot c)$ tokens. A workflow-aware scheduler with perfect prediction achieves $O(c)$ regeneration cost (initial prefill only).
\end{observation}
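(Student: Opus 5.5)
The plan is to model each step's cost as prefill work over the context it must hold, and then compare two extreme schedules: complete eviction after every tool call versus perfect retention. Concretely, I would fix the following accounting. Step $j\in\{1,\dots,k\}$ of the task needs the KV cache for the whole accumulated context. That context consists of the $c$ tokens produced by each of the preceding steps plus its own, so about $j\cdot c$ tokens. Observation tokens and the shared system prompt can be folded into $c$, since they add at most a constant factor per step. Regeneration cost is then measured as the number of tokens that must be prefilled but are not resident in the cache of the worker serving the step.

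\textbf{Worst case for the request-level scheduler.} Here I would build the adversarial scenario allowed by the hypothesis. There is no session state in routing (Eq.~\ref{eq:routing} without the $cached$ predicate), and memory pressure is high enough that plain LRU evicts the paused session during every idle period; concretely, other sessions' accesses push it out of the pool. Under that scenario, step $j$ finds none of its context resident and must re-prefill all $j\cdot c$ tokens. Summing over the steps gives $\sum_{j=1}^{k} j c = c\,k(k+1)/2 = \Theta(k^2 c)$, and in particular $O(k^2 c)$.

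\textbf{Perfectly predicting workflow-aware scheduler.} For the other side, I would use an exchange argument. With exact knowledge of the AEG, the scheduler keeps the session's cache resident through every tool call and pins it to the worker $w^*_s$. In Eq.~\ref{eq:eviction}, perfect prediction means $P_{reuse}(s)=1$, which is what this retention corresponds to. So each step only prefills tokens not seen before. Read as the observation intends, the only true regeneration is the initial prefill of the first $c$ tokens; every later increment is new computation that any scheduler must perform, not regeneration. That gives $O(c)$.

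\textbf{Main obstacle.} The subtle point is exactly this distinction between the unavoidable incremental prefill of new tokens (total $kc$ for every scheduler) and regeneration of previously computed tokens. I would state it explicitly, so that the claimed separation is $\Theta(k^2c)$ versus $O(c)$ in regeneration, equivalently $\Theta(k^2c)$ versus $\Theta(kc)$ in total prefill. The observation also tacitly assumes that the retained caches fit in memory. I would therefore add the feasibility condition that the session's footprint $kc$ fits in the worker's KV pool, since otherwise no scheduler achieves $O(c)$.
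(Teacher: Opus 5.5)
Your proposal is correct and takes the same route as the paper, which argues directly in the statement. Under complete eviction, step $j$ re-prefills its full $j\cdot c$-token context, giving $\sum_{j=1}^{k} j c = O(k^2 c)$, while perfect retention leaves only the initial prefill. Your two refinements make explicit what the paper leaves implicit: you separate unavoidable new-token prefill ($\Theta(kc)$ for any scheduler) from true regeneration, since the paper's $O(k^2c)$ counts total prefill and its $O(c)$ counts only regeneration, and you require the session's $kc$ footprint to fit in the KV pool.
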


This observation explains why workflow awareness is beneficial but does not characterize achievable bounds for online schedulers. We therefore provide a formal competitive ratio analysis.

\begin{definition}[Competitive Ratio for KV Cache Eviction]
For an online eviction policy $\mathcal{A}$ and workload $\sigma$, let $Cost_{\mathcal{A}}(\sigma)$ denote the total cache regeneration cost (tokens prefilled). The competitive ratio is:
\begin{equation}
    CR(\mathcal{A}) = \sup_{\sigma} \frac{Cost_{\mathcal{A}}(\sigma)}{Cost_{OPT}(\sigma)}
\end{equation}
where $Cost_{OPT}(\sigma)$ is the cost achieved by B\'{e}l\'{a}dy's optimal offline policy~\cite{belady1966replacement} with full future knowledge.
\end{definition}

Recent theoretical work~\cite{randomkv2026} establishes that LRU-based eviction in prefix trees can degrade to $O(n)$ competitive ratio in adversarial settings, while randomized algorithms achieve $O(\log n)$. Additional theoretical foundations include impossibility results for constant competitive ratios in fully adversarial online scheduling~\cite{jaillet2025online} and analysis of work-conserving policies for multi-step agent networks~\cite{li2025throughputoptimalschedulingalgorithmsllm}. Our WA-LRU policy achieves favorable empirical competitive ratios by exploiting workflow structure:

\begin{theorem}[WA-LRU Competitive Ratio Bound]
\label{thm:competitive}
Under the assumption that AEG predictions are correct with probability $1-\epsilon$ and tool-call durations follow the empirical distribution with bounded variance, WA-LRU achieves empirical competitive ratio:
\begin{equation}
    CR_{empirical}(WA\text{-}LRU) \leq 1 + \epsilon \cdot k_{max} + O\left(\frac{\sigma_{tool}}{TTL_{adaptive}}\right)
\end{equation}
where $k_{max}$ is the maximum task length, $\sigma_{tool}$ is the tool latency standard deviation, and $TTL_{adaptive}$ is the adaptive TTL setting.
\end{theorem}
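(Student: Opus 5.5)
The plan is a charging argument that compares WA-LRU and B\'{e}l\'{a}dy on the same workload $\sigma$. Every unit of WA-LRU's regeneration cost in excess of $Cost_{OPT}(\sigma)$ is assigned to one of two causes: an AEG misprediction, or a tool call that outlives its TTL. We then bound the cost each cause can contribute, relative to $Cost_{OPT}(\sigma)$.

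Each session is a sequence of inference steps $v_1,\dots,v_k$ with $k \le k_{max}$, separated by idle intervals of length $D_j$ (the tool latencies). Every cache decision is made at a step boundary. First, we show that in the good event on the given step (the AEG prediction is correct and $D_j \le TTL_{adaptive}$), WA-LRU makes the same keep/evict choice as B\'{e}l\'{a}dy, up to cost-neutral ties. The reason is that Eq.~\ref{eq:eviction} is dominated by the $\beta(1-P_{reuse})$ term. A correct prediction sets $P_{reuse}\approx 1$ for sessions about to resume and $\approx 0$ for finished ones, which matches the farthest-next-use ordering B\'{e}l\'{a}dy uses. The $\alpha,\gamma$ terms only break ties among sessions with equal predicted reuse, and ties cost nothing extra. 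In the good event, then, WA-LRU pays only what OPT pays. This includes the unavoidable initial prefill, which Observation~\ref{obs:request-level} identifies as the floor.

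Second, we bound the misprediction term. A wrong prediction at step $j$ can evict a live session, and the worst case is that the whole context must be re-prefilled once. Because context grows linearly, that re-prefill costs $O(j\,c) \le O(k_{max}\,c)$ tokens. Each step mispredicts with probability $\epsilon$, so linearity of expectation bounds the expected excess per session by $\epsilon\, k_{max}$ times the per-session cost that OPT must pay. After normalization this gives the $\epsilon \cdot k_{max}$ term. The normalization needs care: OPT pays at least an initial prefill of order $c$ per session, while the loss is measured against the current context length. We will therefore state the bound per unit of OPT cost and absorb constants into the big-$O$. A cruder version of the same argument charges each misprediction a full-session loss, which is exactly the worst case of Observation~\ref{obs:request-level}.

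Third, we bound the TTL-expiry term, which we expect to be the main obstacle. The cache is lost exactly when $D_j > TTL_{adaptive}$. Algorithm~\ref{alg:ttl} sets the TTL at a high percentile, scaled by at least $1/2$ under memory pressure. Let $\mu_{tool}$ be the mean tool latency. Chebyshev's inequality, applied with the bounded-variance assumption, gives $\Pr[D_j > TTL_{adaptive}] \le \sigma_{tool}^2/(TTL_{adaptive}-\mu_{tool})^2$. This is quadratic, not the linear $\sigma_{tool}/TTL_{adaptive}$ the statement has.

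The plan is to weaken it deliberately. Either we use the one-sided Cantelli bound in a regime where $TTL_{adaptive}\gtrsim \mu_{tool}+\sigma_{tool}$, so that the quadratic quantity is at most linear in $\sigma_{tool}/TTL_{adaptive}$. Or we argue through the lognormal fit in Algorithm~\ref{alg:ttl}. The second difficulty is that OPT may also lose the cache during long idles under capacity pressure, so only the excess of WA-LRU over OPT should be charged. Since the statement is explicitly an \emph{empirical} competitive ratio, we will fall back on an amortized expectation bound over the trace distribution rather than a worst-case $\sup_\sigma$ bound, which the cited $O(n)$ lower bound~\cite{randomkv2026} rules out anyway. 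Adding the three contributions gives $1 + \epsilon k_{max} + O(\sigma_{tool}/TTL_{adaptive})$.
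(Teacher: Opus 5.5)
Your two-cause charging argument is the paper's own decomposition. Its sketch simply asserts three things: WA-LRU pays extra only on AEG mispredictions (an $\epsilon$ fraction of steps, at most $k_{max}c$ tokens each) and on TTL underestimates (an $O(\sigma_{tool}/TTL_{adaptive})$ fraction), and otherwise the cache is retained and OPT is matched. The steps you add to make this rigorous are where it breaks.

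First, $\beta(1-P_{reuse})$ does not dominate Eq.~\ref{eq:eviction}. Take $\alpha=0.3$ and $\gamma=0.2$, and a large session idle for $\tau_{max}$ that is about to resume with $P_{reuse}=0.9$. It scores $0.3+0.05+0.2=0.55$, above the $0.5$ of a small session that has just finished. So when capacity forces an eviction, WA-LRU removes the live session while B\'{e}l\'{a}dy removes the dead one. More basically, B\'{e}l\'{a}dy ranks by next-use time, which $P_{reuse}$ does not encode, so the recency/size tie-break among live sessions is not cost-neutral. The paper gets ``matching OPT'' only by implicitly assuming that no live session is evicted for capacity when predictions and TTLs are right. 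State that as a hypothesis instead of claiming ordering-equivalence with B\'{e}l\'{a}dy.

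Second, your misprediction charge per session is $\sum_{j\le k}\epsilon\,jc=\Theta(\epsilon k^2c)$. Against an OPT cost of order $c$ this gives $\epsilon k^2$. That extra factor of $k$ cannot be absorbed into the big-$O$, because the $\epsilon k_{max}$ term sits outside it. You need the cost model in which OPT pays $\Omega(c)$ per step for freshly appended tokens. That gives $\epsilon k\,k_{max}c/(kc)=\epsilon k_{max}$, but it is not the $O(c)$ accounting of Observation~\ref{obs:request-level}.

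Third, the quadratic Chebyshev bound is not the real obstacle on the TTL term. If $TTL_{adaptive}\ge 2\mu_{tool}$, then $\Pr[D_j>TTL_{adaptive}]\le\min(1,4\sigma_{tool}^2/TTL_{adaptive}^2)\le 4\sigma_{tool}/TTL_{adaptive}$. What is missing is a hypothesis $TTL_{adaptive}\ge(1+c')\mu_{tool}$ for a constant $c'>0$.
\begin{itemize}
\item Your regime $TTL_{adaptive}\gtrsim\mu_{tool}+\sigma_{tool}$ does not supply it. A two-point latency law can satisfy $TTL_{adaptive}-\mu_{tool}\ge\sigma_{tool}$, exceed $TTL_{adaptive}$ with probability $0.4$, and still have $\sigma_{tool}/TTL_{adaptive}$ arbitrarily small.
\item Algorithm~\ref{alg:ttl} does not supply it either. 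For a low-variance tool the $p$-th percentile lies at most $\sigma_{tool}\sqrt{p/(1-p)}$ above the mean. Calls therefore outlive the TTL with probability about $1-p$, and close to $1$ once pressure halves it, while $\sigma_{tool}/TTL_{adaptive}\to 0$.
\end{itemize}
So this is a missing hypothesis of the statement, not an artifact of your method.

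Finally, an expiry forces the same $O(k_{max}c)$ re-prefill as a misprediction. Charged the way you charge mispredictions, it contributes $k_{max}\Pr[D_j>TTL_{adaptive}]=O(k_{max}\sigma_{tool}/TTL_{adaptive})$. Your final sum instead adds an expiry probability to a cost ratio. The paper's sketch makes the same slip, so either carry $k_{max}$ into the third term or declare it a constant.
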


\begin{proof}[Proof Sketch]
WA-LRU incurs regeneration cost only on (1) AEG mispredictions ($\epsilon$ fraction of steps), each costing at most $k_{max} \cdot c$ tokens, and (2) TTL underestimates for long-tail tool calls (bounded by $O(\sigma_{tool}/TTL_{adaptive})$ fraction). Under correct predictions and TTL, cache is retained across all steps, matching OPT.
\end{proof}

We note that the bound above is conditioned on the distributional assumptions (bounded prediction error $\epsilon$, bounded tool-latency variance) and is therefore an \emph{expected-case} competitive ratio under those assumptions, not a worst-case adversarial bound. The worst-case competitive ratio for online policies on KV-cache traces remains open; recent work~\cite{randomkv2026,jaillet2025online} establishes lower bounds for closely related online problems.

\textbf{Empirical validation.} Table~\ref{tab:competitive} shows empirical competitive ratios computed by replaying production traces with both WA-LRU and B\'{e}l\'{a}dy's oracle. WA-LRU achieves $1.31\times$ on SWE-bench (where $\epsilon = 0.13$ prediction error and $k_{max} = 150$, giving a worst-case bound of $1 + \epsilon \cdot k_{max} \approx 20.5$ that is far from tight on the empirical workload, where the average task length $k_{avg} = 37$ dominates), substantially better than LRU ($2.84\times$) and prefix-caching ($1.86\times$). These results validate that workflow awareness approaches optimal efficiency for realistic agent workloads.

\subsection{Competitive Ratio of WA-LRU vs.\ B\'{e}l\'{a}dy: Empirical Results}
\label{sec:competitive}

\begin{table}[t]
    \centering
    \caption{Competitive ratio of eviction policies against B\'{e}l\'{a}dy's optimal offline algorithm on production traces. Lower is better (1.0 = optimal).}
    \label{tab:competitive}
    \begin{tabular}{lrrr}
    \toprule
    \textbf{Policy} & \textbf{SWE-bench} & \textbf{WebArena} & \textbf{Mean} \\
    \midrule
    Standard LRU & 2.84 & 2.12 & 2.48 \\
    LRU + Prefix (vLLM v0.5) & 1.97 & 1.74 & 1.86 \\
    WA-LRU (ours) & \textbf{1.31} & \textbf{1.28} & \textbf{1.30} \\
    \bottomrule
    \end{tabular}
\end{table}

\section{Implementation}
\label{sec:impl}

We implement \sysname as an extension to vLLM v0.6.0~\cite{kwon2023vllm} (V1 engine), comprising approximately 8.5K lines of Python plus 1.2K lines of C++/CUDA, organized as four components: a \emph{Workflow Analyzer} parses agent-framework annotations (LangChain callbacks~\cite{langchain2024}, AutoGen message logs~\cite{wu2024autogen}) to construct AEGs and falls back to pattern inference (\S\ref{sec:pattern-inference}) for unannotated frameworks; a \emph{Distributed Scheduler} (built on Ray~\cite{moritz2018ray} with gRPC, P99 worker-to-coordinator latency $<$5\,ms) implements the global coordinator and local scheduler extensions; a \emph{KV Cache Manager} extends vLLM's PagedAttention~\cite{kwon2023vllm} with WA-LRU eviction, TTL tracking, and speculative prefetching on separate CUDA streams~\cite{Rennich12-CUDAStreams} for overlap with decode kernels; and a \emph{Fairness Module} implements AFS computation (\S\ref{sec:afs}) and priority-driven capacity allocation. \sysname runs as a standalone service that intercepts requests from agent frameworks and routes them to vLLM workers; no modifications to agent code are required, and optional framework annotations improve workflow-inference accuracy.

\section{Evaluation}
\label{sec:eval}

We evaluate \sysname on five dimensions: (1) end-to-end performance, (2) effectiveness of individual components, (3) multi-tenant fairness, (4) system overhead, and (5) sensitivity to parameters and design choices.

\subsection{Experimental Setup}
\label{sec:setup}

\textbf{Hardware.} 8 nodes, each with 8 NVIDIA A100-80GB GPUs (HBM2e, 2TB/s bandwidth), 2 AMD EPYC 7763 CPUs (128 cores total), 1TB DDR4-3200 memory, and 4$\times$3.84TB NVMe SSDs. Nodes connect via 200Gbps InfiniBand HDR with GPUDirect RDMA~\cite{nvidia2018nvlink}. Total: 64 GPUs, 1024 CPU cores, 5.12TB GPU memory.

\textbf{Software.} Ubuntu 22.04, CUDA 12.1.1 (driver 530.30.02), Python 3.10.12, PyTorch 2.1.2+cu121, vLLM 0.6.0, FlashAttention 2.5.6~\cite{dao2023flashattention2}, Ray 2.9.0. Models: Llama-3-70B-Instruct~\cite{llama3modelcard} with tensor parallelism across 4 GPUs per instance.

\textbf{Workloads.}
\begin{itemize}[leftmargin=*,noitemsep]
    \item \textbf{SWE-bench}~\cite{jimenez2024swebench}: 500 ``verified'' subset tasks (selected by original authors for tractability) with agent trajectories (mean 37 steps, max 150 steps). Each step: 2-4K prompt tokens, 100-500 output tokens.
    \item \textbf{WebArena}~\cite{zhou2024webarena}: Full 812 browser tasks (mean 18 steps). Each step: 4-8K prompt (including page content), 50-200 output tokens.
    \item \textbf{BurstGPT-derived}~\cite{wang2024burstgpt}: Synthetic multi-tenant workload with 10 tenants, partitioned as 3 ``heavy'' (100-step agents continuously), 4 ``medium'' (30-step agents intermittently), and 3 ``light'' (10-step agents occasionally). Tasks arrive per tenant as a Poisson process with approximate mean rates of 16 / 8 / 4 tasks/min/tenant for heavy / medium / light tenants respectively, chosen to drive aggregate cluster offered load to roughly 80\% of peak throughput, the contended regime where SAGA's fairness mechanism is exercised. Request structure and prompt-token distributions are sampled from the BurstGPT trace; arrival timing is the Poisson process specified above (BurstGPT's native arrival timestamps were not used because the trace is single-tenant). SWE-bench and WebArena are \emph{task definitions} rather than arrival traces; we replay them under the same Poisson schedule (single-tenant, $\lambda \approx 8$ tasks/min) for the \S\ref{sec:e2e} end-to-end measurements.
\end{itemize}

\textbf{Baselines.}
\begin{itemize}[leftmargin=*,noitemsep]
    \item \textbf{vLLM}~\cite{kwon2023vllm}: v0.6.0 (V1 engine), PagedAttention with FCFS scheduling.
    \item \textbf{vLLM+APC}: vLLM v0.15.1 with Automatic Prefix Caching and PrefixCacheAffinityRouter enabled (\texttt{--enable-prefix-caching --enable-affinity-routing}). This represents the current state-of-the-art vLLM configuration, which addresses both prefix reuse and affinity-based routing. Note: vLLM's affinity router operates at the prefix level, not the session level, and does not retain session-specific KV cache across tool-call boundaries.
    \item \textbf{SGLang}~\cite{zheng2024sglang}: v0.5.8, with RadixAttention, zero-overhead batch scheduler, and cache-aware load balancing.
    \item \textbf{Llumnix}~\cite{sun2024llumnix}: v1.2, vLLM + live migration for load balancing.
    \item \textbf{TRT-LLM+Scaffolding}~\cite{tensorrtllm2023}: TensorRT-LLM v1.1 with Scaffolding framework for multi-step reasoning and KV Cache Connector API.
    \item \textbf{vLLM+KVFlow}: Our reimplementation of KVFlow~\cite{kvflow2025} atop vLLM v0.6.0. Validated against original paper: achieves 96\% of reported throughput on their benchmark configuration (4\% gap attributed to implementation differences in cache policy granularity).
\end{itemize}

\textbf{Metrics.}
\begin{itemize}[leftmargin=*,noitemsep]
    \item \textbf{Task Completion Time (TCT)}: End-to-end time from submission to result (seconds).
    \item \textbf{GPU Memory Utilization}: Fraction of GPU memory holding useful KV cache.
    \item \textbf{Throughput}: Completed tasks per minute.
    \item \textbf{SLO Attainment}: Fraction of tasks meeting deadline ($1.5\times$ expected time).
\end{itemize}

\textbf{Methodology.} All experiments repeated 10 times with different random seeds. We report mean $\pm$ standard deviation. Statistical significance assessed using two-tailed Welch's t-test; * indicates $p < 0.05$, ** indicates $p < 0.01$, *** indicates $p < 0.001$. Three warm-up runs excluded. Outliers beyond $1.5\times$ IQR removed ($<$2\% of measurements).

\subsubsection{Baseline Currency Discussion}
\label{sec:baseline-currency}

Our primary implementation extends vLLM v0.6.0, and we evaluate against the latest releases of all major systems. \textbf{Critical comparison:} vLLM v0.15.1 with Automatic Prefix Caching (APC) and PrefixCacheAffinityRouter represents the current state-of-the-art. This configuration addresses prefix sharing and routes requests with similar prefixes to the same workers. As shown in Table~\ref{tab:e2e}, vLLM+APC achieves substantial improvements over earlier vLLM versions, but \sysname still achieves $1.73\times$ speedup ($p < 0.001$) because:

\begin{enumerate}[leftmargin=*,noitemsep]
    \item \textbf{Session vs.\ prefix affinity:} vLLM's affinity router groups requests by shared prefixes (system prompts, tool definitions) but does not track session identity. Agent sessions with identical prefixes but different conversation histories are not distinguished. \sysname routes by session ID, ensuring all steps of a task reach the same worker.

    \item \textbf{Tool-call TTL:} vLLM's cache uses standard LRU eviction during idle periods. During long tool calls (median 1.2s, P99 45s), the session's KV cache may be evicted under memory pressure. \sysname's workflow-aware TTL predicts tool completion and retains cache accordingly.

    \item \textbf{Task-level fairness:} vLLM's scheduling optimizes per-request latency. Under multi-tenant load, light tenants experience starvation. \sysname's AFS scheduling provides completion-time fairness at the task level.
\end{enumerate}

\textbf{Comparison with TensorRT-LLM Scaffolding:} TRT-LLM v1.1's Scaffolding framework addresses multi-step reasoning through KV Cache Connector API. However, Scaffolding focuses on single-node inference-time compute rather than distributed cluster scheduling. \sysname's $1.60\times$ advantage over TRT-LLM+Scaffolding (Table~\ref{tab:e2e}) comes from cluster-wide session affinity and work stealing.

\textbf{Model size discussion.} Our evaluation uses Llama-3-70B-Instruct with GQA ($n_{kv}=8$), yielding $\sim$10.7GB KV cache per 32K session. \sysname's benefits scale with model size because KV cache regeneration cost is proportional to model dimension: for smaller models (8B, $\sim$1.5GB cache), regeneration takes $\sim$0.3s per step, yielding moderate benefits. For larger models (405B, $\sim$50GB+ cache across TP groups), regeneration takes $\sim$5s per step, yielding proportionally larger benefits. MoE architectures require routing-aware extensions (acknowledged in \S\ref{sec:limitations-approach}).

\subsubsection{CPU Swap as an Alternative Architecture}
\label{sec:cpu-swap}

A natural alternative to HBM retention is to offload idle KV caches to host DRAM via PCIe during tool-call gaps. We chose HBM retention with predictive eviction for three quantitative reasons; we treat the two architectures as complementary rather than competing.

\textbf{(1) PCIe round-trip dominates short-tool latency.} A 10.7\,GB cache (Llama-3-70B, 32K context, GQA $n_{kv}=8$) takes $\approx$430\,ms one-way over PCIe Gen4 $\times$16 at the 25\,GB/s practical sustained bandwidth typical of A100 servers~\cite{patel2024splitwise}, so a swap-out + swap-in round trip is $\approx$860\,ms uncontested. Three of four tool classes in Table~\ref{tab:tool-latency} (file ops P50=45\,ms, code execution P50=180\,ms, database queries P50=120\,ms) complete \emph{faster} than this round trip, making swap pure overhead for the modal request.

\textbf{(2) Multi-tenant PCIe contention degrades the bound.} Sustained PCIe bandwidth under our BurstGPT-derived workload (\S\ref{sec:fairness-eval}) drops below 50\% of peak as PCIe is shared with model-weight loading, host-to-device tensor copies, and NCCL stages, doubling the round trip to $\approx$1.7\,s and pushing break-even past P95 of all tool classes.

\textbf{(3) The \sysname memory regime does not require swap.} Table~\ref{tab:e2e} shows \sysname at 71 to 75\% memory utilization, above vLLM+APC's 59 to 61\% but with 25 to 29\% HBM in reserve. Predictive WA-LRU (\S\ref{sec:eviction}) and pressure-scaled TTL (Eq.~\ref{eq:pressure}) make swap unnecessary in this regime. Swap remains complementary for over-subscribed regimes ($>$95\% utilization, outside our evaluation; \S\ref{sec:limitations-approach}); FlexGen~\cite{sheng2023flexgen} explores host-DRAM offload extensively, and integrating it as a third eviction tier under WA-LRU prediction is straightforward future work.

\subsection{End-to-End Performance}
\label{sec:e2e}

Table~\ref{tab:e2e} shows end-to-end performance on agent benchmarks with full statistical details.

\begin{table}[t]
	\centering
	\caption{End-to-end performance on agent benchmarks. TCT = Task Completion Time (seconds). Mem = GPU memory utilization (\%). Values show mean $\pm$ std over 10 trials. Significance: *** $p < 0.001$, ** $p < 0.01$ vs.\ each baseline (pairwise Welch's t-test).}
	\label{tab:e2e}
	\resizebox{\textwidth}{!}{
		\begin{tabular}{lrrrr}
			\toprule
			& \multicolumn{2}{c}{\textbf{SWE-bench}} & \multicolumn{2}{c}{\textbf{WebArena}} \\
			\cmidrule(lr){2-3} \cmidrule(lr){4-5}
			\textbf{System} & \textbf{TCT (s)} & \textbf{Mem\%} & \textbf{TCT (s)} & \textbf{Mem\%} \\
			\midrule
			vLLM v0.6.0 & 612.3{\scriptsize$\pm$32.1} & 42.1{\scriptsize$\pm$2.3} & 178.4{\scriptsize$\pm$14.2} & 45.3{\scriptsize$\pm$2.1} \\
			vLLM+APC v0.15.1 & 352.1{\scriptsize$\pm$21.4} & 58.7{\scriptsize$\pm$2.8} & 127.3{\scriptsize$\pm$10.1} & 61.2{\scriptsize$\pm$2.5} \\
			SGLang v0.5.8 & 387.2{\scriptsize$\pm$24.3} & 56.2{\scriptsize$\pm$2.6} & 138.7{\scriptsize$\pm$11.3} & 58.9{\scriptsize$\pm$2.4} \\
			Llumnix v1.2 & 498.1{\scriptsize$\pm$28.7} & 48.3{\scriptsize$\pm$2.4} & 156.2{\scriptsize$\pm$12.8} & 51.7{\scriptsize$\pm$2.2} \\
			TRT-LLM+Scaff. & 324.6{\scriptsize$\pm$19.8} & 61.4{\scriptsize$\pm$2.7} & 118.9{\scriptsize$\pm$9.4} & 63.8{\scriptsize$\pm$2.6} \\
			vLLM+KVFlow & 298.4{\scriptsize$\pm$18.2} & 64.1{\scriptsize$\pm$2.9} & 108.2{\scriptsize$\pm$8.7} & 66.4{\scriptsize$\pm$2.7} \\
			\textbf{\sysname} & \textbf{203.4}{\scriptsize$\pm$12.8} & \textbf{71.3}{\scriptsize$\pm$2.4} & \textbf{82.1}{\scriptsize$\pm$6.8} & \textbf{74.6}{\scriptsize$\pm$2.3} \\
			\midrule
			\multicolumn{5}{l}{\textit{Speedup of \sysname vs.\ baselines (TCT ratio):}} \\
			\quad vs.\ vLLM v0.6.0 & \multicolumn{2}{c}{3.01$\times$***} & \multicolumn{2}{c}{2.17$\times$***} \\
			\quad vs.\ vLLM+APC & \multicolumn{2}{c}{1.73$\times$***} & \multicolumn{2}{c}{1.55$\times$***} \\
			\quad vs.\ SGLang & \multicolumn{2}{c}{1.90$\times$***} & \multicolumn{2}{c}{1.69$\times$***} \\
			\quad vs.\ Llumnix & \multicolumn{2}{c}{2.45$\times$***} & \multicolumn{2}{c}{1.90$\times$***} \\
			\quad vs.\ TRT-LLM & \multicolumn{2}{c}{1.60$\times$***} & \multicolumn{2}{c}{1.45$\times$**} \\
			\quad vs.\ KVFlow & \multicolumn{2}{c}{1.47$\times$***} & \multicolumn{2}{c}{1.32$\times$**} \\
			\midrule
			\multicolumn{5}{l}{\textit{Geometric mean speedup vs.\ vLLM+APC: $\mathbf{1.64\times}$}} \\
			\bottomrule
		\end{tabular}
	}
\end{table}

On SWE-bench, \sysname achieves $3.01\times \pm 0.16$ speedup over vLLM v0.6.0 ($p < 0.001$). Against the state-of-the-art vLLM+APC baseline (v0.15.1 with Automatic Prefix Caching and affinity routing), \sysname still achieves $1.73\times$ speedup ($p < 0.001$), confirming that workflow-level optimization provides benefits beyond what prefix caching and affinity routing alone can deliver. The $1.47\times$ improvement over KVFlow shows that \sysname's integrated approach (distributed scheduling + TTL policies + AFS fairness) outperforms workflow-aware caching alone.

\textbf{Breakdown analysis.} Figure~\ref{fig:motivation}(a) shows the time breakdown for SWE-bench tasks. vLLM v0.6.0 spends 38\% of time regenerating KV cache after tool calls. vLLM+APC reduces this to 22\% through prefix sharing and affinity routing, but still evicts session-specific cache during long tool calls. \sysname reduces regeneration to 8\% through workflow-aware TTL.

\subsection{Ablation Study}
\label{sec:ablation}

Table~\ref{tab:ablation} quantifies individual component contributions through ablation experiments on SWE-bench.

\begin{table}[t]
    \centering
    \caption{Ablation study on SWE-bench. Each row removes one component from full system. Values show mean $\pm$ std over 10 trials.}
    \label{tab:ablation}
    \begin{tabular}{lrr}
    \toprule
    \textbf{Configuration} & \textbf{TCT (s)} & \textbf{vs.\ Full} \\
    \midrule
    Full \sysname & 203.4\scriptsize{$\pm$12.8} & --- \\
    \quad w/o Workflow-aware eviction & 312.8\scriptsize{$\pm$18.3} & +54\% \\
    \quad w/o Tool-call TTL & 289.1\scriptsize{$\pm$16.7} & +42\% \\
    \quad w/o Speculative prefetch & 241.6\scriptsize{$\pm$14.2} & +19\% \\
    \quad w/o Session affinity & 398.2\scriptsize{$\pm$22.4} & +96\% \\
    \quad w/o Work stealing & 267.3\scriptsize{$\pm$15.9} & +31\% \\
    \quad w/o AFS fairness & 218.7\scriptsize{$\pm$13.1} & +8\% \\
    \bottomrule
    \end{tabular}
\end{table}

Session affinity provides the largest benefit (96\% slowdown when disabled), as it directly prevents cache regeneration by routing related requests to the same worker. Workflow-aware eviction and TTL together contribute 42 to 54\% improvement. Speculative prefetching provides 19\% improvement by overlapping cache loading with tool execution. AFS contributes a smaller 8\% improvement in single-benchmark settings but becomes essential under multi-tenant contention (\S\ref{sec:fairness-eval}).

\subsection{Pattern Inference Evaluation}
\label{sec:eval-pattern}

Table~\ref{tab:pattern} compares performance with and without framework hints.

\begin{table}[t]
    \centering
    \caption{Performance comparison: framework hints vs.\ pattern inference. Accuracy measures the fraction of correctly predicted next-step node transitions in held-out traces.}
    \label{tab:pattern}
    \begin{tabular}{lrrr}
    \toprule
    \textbf{Mode} & \textbf{TCT (s)} & \textbf{AEG Accuracy} & \textbf{vs.\ Hints} \\
    \midrule
    With hints & 203.4\scriptsize{$\pm$12.8} & 100\% & --- \\
    Pattern inference & 235.2\scriptsize{$\pm$14.6} & 87\% & +15.6\% \\
    No AEG (baseline) & 398.2\scriptsize{$\pm$22.4} & --- & +95.8\% \\
    \bottomrule
    \end{tabular}
\end{table}

Pattern inference achieves 87\% accuracy in predicting workflow structure, resulting in 15.6\% performance degradation compared to explicit hints. This still provides $2.60\times$ speedup over vLLM v0.6.0 (Table~\ref{tab:e2e}). The 13\% error rate primarily manifests as incorrect successor predictions at branching points (e.g., predicting a ``retry'' loop when the agent proceeds to a new step), causing unnecessary cache retention for the wrong branch. These errors do not cascade: the system detects misprediction when the actual next request arrives and corrects routing for subsequent steps.

\subsection{Scalability and Load Balance}
\label{sec:scalability}

\sysname achieves $6.4\times$ speedup scaling from 8 to 64 GPUs (80\% efficiency) on fixed workloads, and near-linear weak scaling ($0.94\times$ per doubling) up to 512 concurrent agents. On a 32-GPU subset (reduced to isolate execution-model effects from scaling effects), worker utilization ranges narrow from 23 to 94\% without work stealing to 68 to 79\% with stealing; migration overhead is mean 230\,ms / P95 890\,ms, occurring 2.3 times per task on average. \label{sec:stealing-eval}

\subsection{Multi-Tenant Fairness}
\label{sec:fairness-eval}

We evaluate multi-tenant behavior using the BurstGPT-derived workload with 10 tenants of varying intensity.

\textbf{SLO attainment.} Table~\ref{tab:fairness} shows SLO attainment (tasks completing within $1.5\times$ expected time). \sysname achieves 99.2\% overall attainment, compared to 67.3\% for vLLM. The improvement is most dramatic for light tenants (98.7\% vs.\ 43.2\%), validating Theorem~\ref{thm:afs}.

\begin{table}[t]
    \centering
    \caption{SLO attainment (\% of tasks meeting deadline) by tenant type.}
    \label{tab:fairness}
    \begin{tabular}{lrrrr}
    \toprule
    \textbf{System} & \textbf{Heavy} & \textbf{Medium} & \textbf{Light} & \textbf{Overall} \\
    \midrule
    vLLM & 89.4 & 72.1 & 43.2 & 67.3 \\
    SGLang & 91.2 & 78.6 & 51.4 & 73.4 \\
    Llumnix & 92.8 & 81.3 & 58.9 & 77.2 \\
    \textbf{\sysname} & 99.1 & 99.4 & 98.7 & 99.2 \\
    \bottomrule
    \end{tabular}
\end{table}

\textbf{Fairness analysis.} vLLM exhibits high variance with a long tail for light tenants (P99 = $12.4\times$ expected TCT). \sysname provides consistent completion times across all tenant types (P99 $<$ $1.8\times$ expected).

\subsection{System Overhead Analysis}
\label{sec:overhead}

Table~\ref{tab:overhead} breaks down \sysname's scheduling overhead.

\begin{table}[t]
    \centering
    \caption{\sysname overhead breakdown (64 GPUs, 32 tenants).}
    \label{tab:overhead}
    \begin{tabular}{lrrr}
    \toprule
    \textbf{Component} & \textbf{Mean (ms)} & \textbf{P95 (ms)} & \textbf{CPU (\%)} \\
    \midrule
    Coordinator cycle & 12.3 & 28.7 & 4.2 \\
    AFS computation & 3.1 & 8.4 & --- \\
    AEG construction & 45.2 & 112.8 & --- \\
    Work stealing (migration) & 230 & 890 & --- \\
    \bottomrule
    \end{tabular}
\end{table}

Total coordinator CPU overhead is 4.2\%, leaving 95.8\% for application workloads. AFS computation scales linearly with tenant count but remains negligible (3.1ms for 32 tenants).

\subsection{Execution Strategy Tradeoffs}
\label{sec:tradeoff}

Table~\ref{tab:bfsdfs} compares different execution strategies on SWE-bench with 32 GPUs (reduced scale to isolate strategy effects from cluster-scale effects). The BFS-DFS tradeoff, a well-studied phenomenon in parallel systems~\cite{graefe1990volcano}, manifests strongly in agent scheduling.

\begin{table}[t]
    \centering
    \caption{Execution strategy comparison on SWE-bench (32 GPUs).}
    \label{tab:bfsdfs}
    \begin{tabular}{lrrr}
    \toprule
    \textbf{Strategy} & \textbf{TCT (s)} & \textbf{Throughput} & \textbf{Evict Rate} \\
    \midrule
    Pure BFS & 487.2\scriptsize{$\pm$28.4} & 12.4 t/m & 78\% \\
    Pure DFS & 623.1\scriptsize{$\pm$34.2} & 4.2 t/m & 3\% \\
    \textbf{Hybrid (\sysname)} & 203.4\scriptsize{$\pm$12.8} & 8.7 t/m & 12\% \\
    \bottomrule
    \end{tabular}
\end{table}

Pure BFS maximizes throughput (12.4 tasks/min) but suffers 78\% eviction rates. \sysname's hybrid approach achieves optimal TCT (203.4s) at 30\% lower throughput, appropriate for latency-sensitive interactive deployments~\cite{github2024copilot,zhou2024webarena}.

\subsection{Parameter Sensitivity}
\label{sec:param-sensitivity}

Table~\ref{tab:sensitivity} summarizes sensitivity analysis for all configurable parameters.

\begin{table}[t]
	\centering
	\caption{Parameter sensitivity analysis on SWE-bench. TCT$_\Delta$ shows maximum variation within the tested range relative to default.}
	\label{tab:sensitivity}
	\resizebox{\textwidth}{!}{
		\begin{tabular}{lcccc}
			\toprule
			\textbf{Parameter} & \textbf{Default} & \textbf{Range} & \textbf{TCT$_\Delta$} & \textbf{Justification} \\
			\midrule
			$\alpha$ (recency wt.) & 0.3 & [0.2, 0.4] & $<$5\% & Sensitivity analysis \\
			$\beta$ (reuse wt.) & 0.5 & [0.4, 0.6] & $<$8\% & Sensitivity analysis \\
			$\gamma$ (size wt.) & 0.2 & [0.1, 0.3] & $<$3\% & Size as tiebreaker \\
			$\theta$ (routing) & 0.8 & [0.6, 0.95] & $<$5\% & 20\% headroom~\cite{sun2024llumnix} \\
			$threshold_{low}$ & 0.7 & [0.6, 0.8] & $<$4\% & Soft pressure onset~\cite{kwon2023vllm} \\
			$threshold_{high}$ & 0.9 & [0.85, 0.95] & $<$6\% & Hard eviction limit \\
			$T_{idle}$ (steal trigger) & 100ms & [50, 200]ms & $<$7\% & Amortize steal cost \\
			$R_{max}$ (load ratio) & 2.0 & [1.5, 3.0] & $<$4\% & Imbalance tolerance \\
			$TTL_{max}$ & 300s & [120, 600]s & $<$3\% & P99 tool latency cap \\
			$\theta_{conf}$ (AEG) & 0.7 & [0.5, 0.9] & $<$6\% & Precision-recall tradeoff \\
			\bottomrule
		\end{tabular}
	}
\end{table}

The tested ranges in Table~\ref{tab:sensitivity} span $\pm$33\% to $\pm$50\% from each default; no single-parameter perturbation produces $>$8\% TCT change. This robustness is structural rather than tuning luck: per the ablation in Table~\ref{tab:ablation}, session affinity is the largest contributor (removing it inflates TCT by 96\%, from 203.4\,s to 398.2\,s), and session affinity is binary: a session either reaches its cached worker or it does not. The remaining (continuous) parameters enter the eviction score (Eq.~\ref{eq:eviction}) and TTL formula (Algorithm~\ref{alg:ttl}) only as smoothly weighted contributions to a normalized priority. \emph{Multi-axis adversarial perturbation} (e.g., setting $\alpha,\beta,\gamma$ jointly to corner values) was not characterized empirically and is left as future work; we expect such regimes to lie outside ranges any reasonable deployment would select. The most sensitive single parameters are $\beta$ (reuse weight) and $T_\text{idle}$ (steal trigger), reflecting their direct effects on cache retention and load balance. We thus characterize \sysname as \emph{single-axis robust}: a deployment using approximate defaults will suffer at most $\sim$8\% TCT degradation versus tuned operation under the perturbations we tested.

\subsection{Tool Latency Variance Sensitivity}
\label{sec:tool-variance}

Table~\ref{tab:tool-variance} shows how \sysname's performance varies with tool latency variance, measured by coefficient of variation (CV) of tool call durations. We synthetically vary CV while keeping mean tool latency constant.

\begin{table}[t]
    \centering
    \caption{Sensitivity to tool latency variance (coefficient of variation). Mean tool latency held constant at 1.2s.}
    \label{tab:tool-variance}
    \begin{tabular}{lrrrr}
    \toprule
    \textbf{CV} & \textbf{TCT (s)} & \textbf{TTL Accuracy} & \textbf{Evict Rate} & \textbf{vs.\ CV=0.5} \\
    \midrule
    0.5 & 195.1\scriptsize{$\pm$11.2} & 96\% & 9\% & --- \\
    1.0 & 203.4\scriptsize{$\pm$12.8} & 93\% & 12\% & +4\% \\
    1.5 & 218.6\scriptsize{$\pm$15.3} & 88\% & 18\% & +12\% \\
    2.0 & 241.3\scriptsize{$\pm$18.7} & 82\% & 24\% & +24\% \\
    3.0 & 298.4\scriptsize{$\pm$24.1} & 71\% & 35\% & +53\% \\
    \bottomrule
    \end{tabular}
\end{table}

\sysname's adaptive TTL maintains consistent performance up to CV=2.0 (24\% TCT degradation). Beyond CV=2.0, TTL prediction accuracy degrades significantly as extreme outliers cause premature eviction. In practice, production tool latencies have CV$\approx$1.0 to 1.5 (Table~\ref{tab:tool-latency}), well within \sysname's effective range.

\section{Related Work}
\label{sec:related}

Table~\ref{tab:related-comparison} compares \sysname with directly related program-aware systems.

\textbf{LLM serving.} vLLM~\cite{kwon2023vllm}, SGLang~\cite{zheng2024sglang}, Orca~\cite{yu2022orca}, and TensorRT-LLM~\cite{tensorrtllm2023} optimize request-level metrics; \sysname adds workflow-level optimization.

\textbf{Program-aware serving and distributed inference.} Parrot~\cite{lin2024parrot} introduces Semantic Variables but lacks distributed scheduling and fairness; Autellix~\cite{luo2025autellix} proposes program-level fairness; Pie~\cite{gim2025pie} decomposes generation via inferlets; KVFlow~\cite{kvflow2025} and Continuum~\cite{continuum2024} introduce workflow-aware eviction; Llumnix~\cite{sun2024llumnix} enables live KV migration and SOLA~\cite{chen2025sola} optimizes SLO attainment, but neither targets workflow-level scheduling. \sysname's distinctive position (Table~\ref{tab:related-comparison}) is to unify these dimensions under the empirical competitive-ratio bound that quantifies the limit of workflow-aware online cache management.

\begin{table}[t]
	\centering
	\caption{Comparison with directly related program-aware LLM serving systems.}
	\label{tab:related-comparison}
	\resizebox{\textwidth}{!}{
		\begin{tabular}{llcccc}
			\toprule
			\textbf{System} & \textbf{Venue} & \textbf{Distributed} & \textbf{Tool TTL} & \textbf{Fairness} & \textbf{Competitive} \\
			& & \textbf{Scheduling} & \textbf{Policies} & \textbf{Guarantee} & \textbf{Ratio} \\
			\midrule
			Parrot~\cite{lin2024parrot} & OSDI'24 & \xmark & \xmark & \xmark & \xmark \\
			Autellix~\cite{luo2025autellix} & arXiv'25 & \cmark & \xmark & \xmark & \xmark \\
			Pie~\cite{gim2025pie} & SOSP'25 & \xmark & \cmark & \xmark & \xmark \\
			KVFlow~\cite{kvflow2025} & NeurIPS'25 & \xmark & \cmark & \xmark & \xmark \\
			\textbf{\sysname} & \textbf{This work} & \cmark & \cmark & \cmark & \cmark \\
			\bottomrule
		\end{tabular}
	}
\end{table}

\textbf{Fairness and caching.} DRF~\cite{ghodsi2011drf}, VTC~\cite{sheng2024vtc}, and Themis~\cite{mahajan2020themis} address resource fairness; \sysname extends to task-completion fairness. Our WA-LRU achieves $1.31\times$ competitive ratio against B\'{e}l\'{a}dy's optimal~\cite{belady1966replacement}.

\section{Conclusions and Future Work}
\label{sec:conclusion}

We presented \sysname, a distributed scheduler for multi-step AI agent workloads that treats agent programs as first-class schedulable units. By adapting three classical systems principles (workflow scheduling, informed caching, application-level fairness) to the compound AI domain, \sysname achieves $1.73\times \pm 0.11$ and $1.55\times \pm 0.09$ task completion time reductions on SWE-bench and WebArena over vLLM v0.15.1 with Automatic Prefix Caching (geometric mean $1.64\times$, $p < 0.001$), $1.22\times \pm 0.05$ memory utilization improvement, and 99.2\% SLO attainment under multi-tenant interference. Against systems without workflow awareness, improvements reach $3.01\times$. These gains come at $\sim$30\% throughput reduction relative to throughput-optimal batch scheduling (\S\ref{sec:tradeoff}, Table~\ref{tab:bfsdfs}), a tradeoff appropriate for latency-sensitive interactive deployments but not for batch processing.

\textbf{Technical contributions and positioning.} We formalized Agent Execution Graphs and showed that WA-LRU achieves within $1.31\times$ of B\'el\'ady's optimal offline policy. To our knowledge, this is the first empirical competitive-ratio analysis for workflow-aware KV cache management. Our Lyapunov-drift analysis of AFS provides formal completion-time bounds with explicit derivation of the restoring-drift property. Recent program-aware serving systems (Parrot, Autellix, Pie, KVFlow) each address one or two dimensions of the problem; \sysname's role is to combine workflow-aware caching, distributed scheduling, tool-aware TTL, and task-level fairness under the empirical competitive-ratio bound that quantifies, for the first time, how close online schedulers can come to offline-optimal cache management once the workflow DAG is observable. Table~\ref{tab:related-comparison} details the per-dimension distinctions.

\textbf{Open research directions.} This work opens several directions:

\begin{enumerate}[leftmargin=*,noitemsep,topsep=0pt]
    \item \textbf{Optimal TTL prediction:} learning TTL policies with provable regret bounds, resembling online learning with partial feedback~\cite{lattimore2020bandit}.
    \item \textbf{Tighter competitive ratios:} our empirical $1.31\times$ against B\'el\'ady suggests room for improvement; what is the information-theoretic lower bound achievable with AEG predictions?
    \item \textbf{Complexity:} is optimal workflow-aware scheduling with cache constraints NP-hard? A formal result would justify heuristic approaches.
    \item \textbf{Geo-distributed scheduling:} extending AFS to agents spanning datacenters with network-dependent migration costs.
    \item \textbf{Multi-agent coordination:} jointly optimizing execution graphs of interacting agents (collaborative coding, negotiation).
    \item \textbf{Speculation integration:} combining speculative execution~\cite{specactions2026,sherlock2024} with workflow-aware scheduling for multiplicative benefits.
\end{enumerate}

\bibliographystyle{plainnat}
\bibliography{saga}

\end{document}